\newcommand{\stc}{\operatorname{sc}}
\theoremstyle{plain}
\newenvironment{claiminproof}[1]{\medskip\par\noindent\underline{Claim:}\space#1}{}
\newenvironment{claimproof}[1]{\begin{quote}\par\noindent\emph{Proof of the Claim:}\space#1}{[\emph{End, Proof of the Claim}]\end{quote}}
\DeclareMathOperator{\lcm}{lcm}
\DeclareFontFamily{U}{bigshuffle}{}
\DeclareFontShape{U}{bigshuffle}{m}{n}{
  <5-8> s*[1.7] shuffle7
  <8->  s*[1.7] shuffle10
}{}
\DeclareSymbolFont{BigShuffle}{U}{bigshuffle}{m}{n}
\DeclareMathSymbol\bigshuffle{\mathop}{BigShuffle}{"001}
\DeclareMathSymbol\bigcshuffle{\mathop}{BigShuffle}{"002}
\begin{document}
\title{Commutative Regular Languages with Product-Form Minimal Automata}

\author{Stefan Hoffmann\orcidID{0000-0002-7866-075X}}
\authorrunning{S. Hoffmann}
%
\institute{Informatikwissenschaften, FB IV, 
  Universit\"at Trier,  Universitätsring 15, 54296~Trier, Germany, 
  \email{hoffmanns@informatik.uni-trier.de}}
\maketitle              
\begin{abstract}
 We introduce a subclass of the commutative regular languages
 that is characterized by the property that
 the state set of the minimal deterministic automaton can be written as a certain Cartesian product.
 This class behaves much better with respect to the state complexity
 of the shuffle, for which we find the bound~$2nm$ if the input languages have state complexities $n$ and $m$,
 and the upward and downward closure and interior operations,
 for which we find the bound~$n$.
 In general, only the bounds $(2nm)^{|\Sigma|}$
 and $n^{|\Sigma|}$ are known for these operations in the commutative case.
 We prove different characterizations of this class
 and present results to construct languages from this class.
 Lastly, in a slightly more general setting of partial commutativity, we introduce other, related, 
 language classes and investigate the relations between them.

\keywords{finite automaton \and state complexity \and shuffle \and upward closure \and downward closure \and commutative language \and product-form minimal automaton \and partial commutation} 
\end{abstract}

\section{Introduction}
\label{sec:introduction}

The state complexity, as used here, of a regular language $L$
is the minimal number of states needed in a complete deterministic automaton
recognizing~$L$. The state complexity of an operation
on regular languages is the greatest state complexity
of the result of this operation
as a function of the (maximal) state complexities of its arguments.

Investigating the state complexity of the result of a regularity-preserving operation on regular languages,
see~\cite{GaoMRY17} for a survey, was first initiated by Maslov in~\cite{Mas70} and systematically started by Yu, Zhuang \& Salomaa in~\cite{YuZhuangSalomaa1994}.

A language is called commutative, if
for each word in the language, every permutation of this word is also in the language.
The class of commutative automata, which recognize commutative regular languages, was introduced in~\cite{BrzozowskiS73}.

The shuffle and iterated shuffle have been introduced and studied to understand 
the semantics of parallel programs. This was undertaken, as it appears
to be, independently by Campbell and Habermann~\cite{CamHab74}, by Mazurkiewicz~\cite{DBLP:conf/mfcs/Marzurkiewicz75}
and by Shaw~\cite{Shaw78zbMATH03592960}. They introduced \emph{flow expressions}, 
which allow for sequential operators (catenation and iterated catenation) as well
as for parallel operators (shuffle and iterated shuffle)
to specify sequential and parallel execution traces.

The shuffle operation as a binary operation, but not the iterated shuffle,
is regularity-preserving on all regular languages. The state complexity of the shuffle
operation in the general cases was investigated in~\cite{BrzozowskiJLRS16}
for complete deterministic automata and in~\cite{DBLP:journals/jalc/CampeanuSY02}
for incomplete deterministic automata. The bound $2^{nm-1} + 2^{(m-1)(n-1)}(2^{m-1}-1)(2^{n-1}-1)$ was obtained
in the former case, which is not known to be tight, and the tight bound $2^{nm}-1$
in the latter case. 

A word is a (scattered) subsequence of another word, if it can be obtained from the latter word by deleting
letters. This gives a partial order, and the upward and downward closure and interior operations
refer to this partial order. The upward closures are also known as shuffle ideals.
The state complexity of these operations was investigated in~\cite{DBLP:journals/tcs/GruberHK07,DBLP:journals/fuin/GruberHK09,DBLP:journals/ita/Heam02,KarandikarNS16,DBLP:journals/fuin/Okhotin10}

The state complexity of the projection operation was investigated in~\cite{Hoffmann2021DLT,DBLP:journals/tcs/JiraskovaM12,Wong98}.
In~\cite{Wong98}, the tight upper bound $3 \cdot 2^{n-2} - 1$
was shown, and in~\cite{DBLP:journals/tcs/JiraskovaM12} the refined, and tight, bound $2^{n-1} + 2^{n-m} - 1$
was shown, where $m$ is related to the number of unobservable transitions for the projection operator.
Both results were established for incomplete deterministic automata.

In~\cite{Hoffmann2021NISextended,DBLP:conf/cai/Hoffmann19,DBLP:conf/dcfs/Hoffmann21,Hoffmann2021DLT} the state complexity of these operations
was investigated for commutative regular languages.
The results are summarized in Table~\ref{tab:sc_known_results}.

\begin{table}[ht]
    \centering 
    \begin{tabular}{|c|c|c|c|} 
    \hline
    Operation                                    &  Upper Bound        &  Lower Bound              & References \\ \hline 
    $\pi_{\Gamma}(U)$, $\Gamma \subseteq \Sigma$ & $n$                 &  $n$  &  \cite{Hoffmann2021NISextended,Hoffmann2021DLT} \\
    $U \shuffle V$ & $\min\{(2nm)^{|\Sigma|},f(n,m)\}$ & $\Omega\left( nm \right) $ &  \cite{BrzozowskiJLRS16,Hoffmann2021NISextended,DBLP:conf/cai/Hoffmann19} \\
    $\uparrow\! U$                               & $n^{|\Sigma|}$ & $\Omega\left( \left( \frac{n}{|\Sigma|} \right)^{|\Sigma|} \right)$ & \cite{DBLP:journals/ita/Heam02,Hoffmann2021NISextended,DBLP:conf/dcfs/Hoffmann21} \\
    $\downarrow\! U$                             & $n^{|\Sigma|}$ & $n$ &  \cite{Hoffmann2021NISextended,DBLP:conf/dcfs/Hoffmann21} \\
    $\uptodownarrow\! U$                         & $n^{|\Sigma|}$ & $\Omega\left( \left( \frac{n}{|\Sigma|} \right)^{|\Sigma|} \right)$ &  \cite{Hoffmann2021NISextended,DBLP:conf/dcfs/Hoffmann21} \\
    $\downtouparrow\! U$                         & $n^{|\Sigma|}$ & $n$ & \cite{Hoffmann2021NISextended,DBLP:conf/dcfs/Hoffmann21} \\
    $U \cup V$,$U \cap V$                                  & $nm$               & tight for each $\Sigma$ & \cite{Hoffmann2021NISextended,DBLP:conf/cai/Hoffmann19} \\ \hline
    \end{tabular}
    \caption{Overview of results for commutative regular languages.
     The state complexities of the input languages are $n$ and $m$.
     Also, $f(n,m) =  2^{nm-1} + 2^{(m-1)(n-1)}(2^{m-1}-1)(2^{n-1}-1)$
     is the general bound for shuffle from~\cite{BrzozowskiJLRS16} in case of complete automata.} 
    \label{tab:sc_known_results}
\end{table}

\begin{table}[ht]
    \centering 
    \begin{tabular}{|c|c|c|c|} 
    \hline
    Operation                                    &  Upper Bound        &  Lower Bound & Reference             \\ \hline 
    $\pi_{\Gamma}(U)$, $\Gamma \subseteq \Sigma$ & $n$                 &  $n$   & Thm.~\ref{thm:sc_results} \\
    $U \shuffle V$ &  $2nm$ & $\Omega\left( nm \right) $ & Thm.~\ref{thm:sc_results} \\
    $\uparrow\! U$,$\downarrow\! U$,$\uptodownarrow\! U$,$\downtouparrow\! U$                                & $n$ & $n$ & Thm.~\ref{thm:sc_results}  \\
    $U \cap V$, $U \cup V$                                    & $nm$               & tight for each $\Sigma$ & Thm.~\ref{thm:sc_results} \\ \hline
    \end{tabular}
    \caption{State Complexity results on the subclass
    of commutative languages with product-form minimal automaton
    for input languages with state complexities $n$ and~$m$.}
    \label{tab:sc_product-from}
\end{table}

 In~\cite{GomezA08} the minimal commutative automaton was introduced, which can be associated
 with every commutative regular language.\todo{die unteren schranken noch zeigen.}
 This automaton played a crucial role in~\cite{Hoffmann2021NISextended,DBLP:conf/cai/Hoffmann19}
 to derive the bounds mentioned in Table~\ref{tab:sc_known_results}.
 Here, we will investigate the subclass of those language
 for which the minimal commutative automaton is in fact the smallest automaton
 recognizing a given commutative language.\todo{und beispiel unterschied kann beliebig groß werden.}
 For this language class, we will derive the following state complexity bounds
 summarized in Table~\ref{tab:sc_product-from}.
Additionally, we will prove other characterizations
and properties of the subclass considered and relate it with other subclasses, in a more general setting,
in the final chapter.

\section{Preliminaries}

In this section and Section~\ref{sec:product-form},
we assume that $k \ge 0$ denotes our alphabet size and 
$\Sigma = \{a_1, \ldots, a_k\}$ is our \emph{alphabet}.
We will also write $a,b,c$ for $a_1,a_2,a_3$ in case of $|\Sigma| \le 3$.
The set $\Sigma^{\ast}$ denotes
the set of all finite sequences over $\Sigma$, i.e., of all \emph{words}. The finite sequence of length zero,
or the \emph{empty word}, is denoted by $\varepsilon$. For a given word we denote by $|w|$
its \emph{length}, and for $a \in \Sigma$ by $|w|_a$ the \emph{number of occurrences of the symbol $a$}
in $w$. 
For $a \in \Sigma$, we set $a^* = \{a\}^*$.
A \emph{language} is a subset of $\Sigma^*$.
For $u \in \Sigma^*$, the \emph{left quotient} is $u^{-1}L = \{ v \in \Sigma^* \mid uv \in L\}$
and the \emph{right quotient} is $Lu^{-1} = \{ v \in \Sigma^* \mid vu \in L \}$.


The \emph{shuffle operation}, denoted by $\shuffle$, is defined by
 \begin{multline*}
    u \shuffle v  = \{ w \in \Sigma^*  \mid  w = x_1 y_1 x_2 y_2 \cdots x_n y_n 
    \emph{ for some words } \\ x_1, \ldots, x_n, y_1, \ldots, y_n \in \Sigma^*
    \emph{ such that } u = x_1 x_2 \cdots x_n \emph{ and } v = y_1 y_2 \cdots y_n \},
 \end{multline*}
 for $u,v \in \Sigma^{\ast}$ and 
  $L_1 \shuffle L_2  := \bigcup_{x \in L_1, y \in L_2} (x \shuffle y)$ for $L_1, L_2 \subseteq \Sigma^{\ast}$.
 If $L_1, \ldots, L_n \subseteq \Sigma^*$, we set $\bigshuffle_{i=1}^n L_i = L_1 \shuffle \ldots \shuffle L_n$.

Let $\Gamma \subseteq \Sigma$.
The \emph{projection homomorphism} $\pi_{\Gamma} : \Sigma^* \to \Gamma^*$ 
is given by $\pi_{\Gamma}(x) = x$ for $x \in \Gamma$
and $\pi_{\Gamma}(x) = \varepsilon$ for $x \notin \Gamma$ and extended
to $\Sigma^*$ by $\pi_{\Gamma}(\varepsilon) = \varepsilon$
and $\pi_{\Gamma}(wx) = \pi_{\Gamma}(w)\pi_{\Gamma}(x)$ for $w \in \Sigma^*$ and $x \in \Sigma$.
As a shorthand, we set, with respect to a given naming $\Sigma = \{a_1, \ldots, a_k\}$,
$\pi_j = \pi_{\{a_j\}}$. Then $\pi_j(w) = a_j^{|w|_{a_j}}$.

A language $L \subseteq \Sigma^*$ is \emph{commutative},
if, for $u,v \in \Sigma^*$ such that $|v|_x = |u|_x$ for every $x \in \Sigma$,
we have $u \in L$ if and only if $v \in L$, i.e., $L$
is closed under permutation of letters in words from $L$.

A quintuple $\mathcal A = (\Sigma, Q, \delta, q_0, F)$ is a \emph{finite deterministic and complete automaton} (DFA),  
where $\Sigma$ is the \emph{input alphabet},
 $Q$ the \emph{finite set of states}, $q_0 \in Q$
the \emph{start state}, $F \subseteq Q$ the set of \emph{final states} and 
$\delta : Q \times \Sigma \to Q$ is the \emph{totally defined state transition function}.
Here, we do not consider incomplete automata.
The transition function $\delta : Q \times \Sigma \to Q$
extends to a transition function on words $\delta^{\ast} : Q \times \Sigma^{\ast} \to Q$
by setting $\delta^{\ast}(q, \varepsilon) := q$ and $\delta^{\ast}(q, wa) := \delta(\delta^{\ast}(q, w), a)$
for $q \in Q$, $a \in \Sigma$ and $w \in \Sigma^{\ast}$. In the remainder, we drop
the distinction between both functions and also denote this extension by $\delta$.
The language \emph{recognized} by an automaton $\mathcal A = (\Sigma, Q, \delta, q_0, F)$ is
$
 L(\mathcal A) = \{ w \in \Sigma^{\ast} \mid \delta(q_0, w) \in F \}.
$
A language $L \subseteq \Sigma^{\ast}$ is called \emph{regular} if $L = L(\mathcal A)$
for some finite automaton~$\mathcal A$.

The \emph{Nerode right-congruence} 
with respect to $L \subseteq \Sigma^*$ is defined, for $u,v \in \Sigma^*$, by $u \equiv_L v$ if and only if 
$
 \forall x \in \Sigma^* : ux \in L \Leftrightarrow vx \in L.
$
The equivalence class of $w \in \Sigma^{\ast}$
is denoted by $[w]_{\equiv_L} = \{ x \in \Sigma^{\ast} \mid x \equiv_L w \}$.
A language is regular if and only if the above right-congruence has finite index, and it can
be used to define the \emph{minimal deterministic automaton}
$\mathcal A_L = (\Sigma, Q_L, \delta_L, [\varepsilon]_{\equiv_L}, F_L)$
with 
$Q_L  = \{ [u]_{\equiv_L} \mid u \in \Sigma^{\ast} \}$,
$\delta_L([w]_{\equiv_L}, a)  = [wa]_{\equiv_L}$
and $F_L = \{ [u]_{\equiv_L} \mid u \in L \}$.
Let $L \subseteq \Sigma^*$ be regular
 with minimal automaton $\mathcal A_L = (\Sigma, Q_L, \delta_L, [\varepsilon]_{\equiv_L}, F_L)$.
 The number $|Q_L|$ is called the \emph{state complexity} of $L$ and
 denoted by $\stc(L)$. The \emph{state complexity of a regularity-preserving operation}
 on a class of regular languages is the greatest state complexity of the result
 of this operation as a function of the (maximal) state complexities for argument languages from the class.

Given two automata $\mathcal A = (\Sigma, S, \delta, s_0, F)$
and $\mathcal B = (\Sigma, T, \mu, t_0, E)$, an \emph{automaton homomorphism} 
$h : S \to T$ is a map between the state sets such that 
for each $a \in \Sigma$ and state $s \in S$ we have
$
 h(\delta(s, a)) = \mu(h(s),a),
$
$h(s_0) = t_0$ and $h^{-1}(E) = F$. If $h : S \to T$ is surjective, then $L(\mathcal B) = L(\mathcal A)$. A bijective homomorphism between automata $\mathcal A$
and $\mathcal B$
is called an \emph{isomorphism}, and the two automata are said to be isomorphic.

The \emph{minimal commutative automaton} was introduced in~\cite{GomezA08}
to investigate the learnability of commutative languages. In~\cite{Hoffmann2021NISextended,DBLP:conf/cai/Hoffmann19}
this construction was used to define the index and period vector and in
the derivation of the state complexity bounds mentioned in Table~\ref{tab:sc_known_results}.

\begin{definition}[minimal commutative aut.]
\label{def::min_com_aut}
 Let $L \subseteq \Sigma^*$ be regular. The \emph{minimal commutative  automaton} for $L$
 is $\mathcal C_L = (\Sigma, S_1 \times \ldots \times S_k, \delta, s_0, F)$
 with 
\[
 S_j = \{ [a_j^m]_{\equiv_L} : m \ge 0 \}, \quad
 F  = \{ ([\pi_1(w)]_{\equiv_L}, \ldots, [\pi_k(w)]_{\equiv_L}) : w \in L \}
\]
 and $\delta((s_1, \ldots, s_j, \ldots, s_k), a_j) = (s_1, \ldots, \delta_{j}(s_j, a_j), \ldots, s_k)$
 with one-letter transitions $\delta_{j}([a_j^m]_{\equiv_L}, a_j) = [a_j^{m+1}]_{\equiv_L}$ for $j = 1,\ldots, k$ and $s_0 = ([\varepsilon]_{\equiv_L}, \ldots, [\varepsilon]_{\equiv_L})$. 
\end{definition}
 
 \begin{toappendix}
\begin{remark}
\label{rem:equal_states_C_L}
 Let $L \subseteq \Sigma^*$ be commutative and $\mathcal C_L = (\Sigma, S_1 \times \ldots \times S_k, \delta, s_0, F)$
 be the minimal commutative automaton.
 Note that, by the definition of the transition function in Definition~\ref{def::min_com_aut},
 we have, for all $u,v \in \Sigma^*$,
 \begin{equation}
    \delta(s_0, u) = \delta(s_0, v) \Leftrightarrow \forall j \in \{1,\ldots,k\} : \pi_j(u) \equiv_L \pi_j(v).
 \end{equation}
\end{remark}
\end{toappendix}

\begin{toappendix} 
For a commutative language $L \subseteq \Sigma^*$,
the Nerode right- and left-congruence and the syntactic congruence
coincide. So, as $[u]_{\equiv_L} = [\pi_1(u) \cdots \pi_k(u)]_{\equiv_L}
= [\pi_1(u)]_{\equiv_L} \cdots [\pi_k(u)]_{\equiv_L}$
we have:
\begin{equation}
\label{eqn:proj_equivalent_words_equivalent}
\forall j \in \{1,\ldots,k\} : \pi_j(u) \equiv_L \pi_j(v) 
\Rightarrow u \equiv_L v.
\end{equation}
\end{toappendix}

In~\cite{GomezA08}, the next result was shown.

\begin{theorem}[G{\'{o}}mez \& Alvarez~\cite{GomezA08}]
\label{thm::min_com_aut}
 Let $L \subseteq \Sigma^*$ be a commutative regular language.
 Then, $L = L(\mathcal C_L)$.
\end{theorem}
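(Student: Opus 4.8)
The plan is to reduce the whole statement to a description of the state reached by reading a word $w$ from the start state $s_0$, and then to use commutativity to translate the coordinate-wise information stored in that state back into membership in $L$.

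First I would prove, by induction on $|w|$, that
\[
 \delta(s_0, w) = ([\pi_1(w)]_{\equiv_L}, \ldots, [\pi_k(w)]_{\equiv_L}).
\]
The base case $w = \varepsilon$ is immediate from $s_0 = ([\varepsilon]_{\equiv_L}, \ldots, [\varepsilon]_{\equiv_L})$ and $\pi_j(\varepsilon) = \varepsilon$. For the inductive step, write $w = w' a_j$; since reading $a_j$ only advances the $j$-th component (by the definition of $\delta$ in Definition~\ref{def::min_com_aut}) via $\delta_j([a_j^m]_{\equiv_L}, a_j) = [a_j^{m+1}]_{\equiv_L}$, the $j$-th coordinate passes from $[\pi_j(w')]_{\equiv_L}$ to $[\pi_j(w') a_j]_{\equiv_L} = [\pi_j(w)]_{\equiv_L}$, while every other coordinate is unchanged, matching $\pi_i(w) = \pi_i(w')$ for $i \ne j$. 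This is exactly the content of Remark~\ref{rem:equal_states_C_L}.

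Given this, membership unfolds as $w \in L(\mathcal C_L)$ iff $\delta(s_0, w) \in F$, which by the definition of $F$ holds iff there is some $w'' \in L$ with $\pi_j(w) \equiv_L \pi_j(w'')$ for every $j$. The inclusion $L \subseteq L(\mathcal C_L)$ is then immediate: if $w \in L$, take $w'' = w$.

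The reverse inclusion $L(\mathcal C_L) \subseteq L$ is where commutativity does the real work, and I expect this to be the main obstacle. Suppose $\pi_j(w) \equiv_L \pi_j(w'')$ for all $j$ with $w'' \in L$. I would invoke the implication~\eqref{eqn:proj_equivalent_words_equivalent}, namely that these coordinate-wise equivalences force $w \equiv_L w''$. The point underlying~\eqref{eqn:proj_equivalent_words_equivalent} is that for a commutative language $\equiv_L$ is in fact a two-sided monoid congruence: right-compatibility is its very definition, and left-compatibility follows because $wux$ and $uwx$ are permutations of one another, so commutativity gives $wux \in L \Leftrightarrow uwx \in L$ (and likewise with $v$ in place of $u$). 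Since $w$ is a permutation of $\pi_1(w)\cdots\pi_k(w)$, and similarly for $w''$, commutativity yields $w \equiv_L \pi_1(w)\cdots\pi_k(w)$ and $\pi_1(w'')\cdots\pi_k(w'') \equiv_L w''$; chaining these with the congruence property applied to the hypotheses $\pi_j(w)\equiv_L \pi_j(w'')$ gives $w \equiv_L w''$. Finally, instantiating the definition of $\equiv_L$ at $x = \varepsilon$ and using $w'' \in L$ yields $w \in L$, which completes the argument.
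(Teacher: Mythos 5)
Your proof is correct and follows essentially the same route the paper takes: the paper defers this statement to the cited source but proves the generalization in Theorem~\ref{thm:canonical_aut}, whose argument is exactly your two steps — the state reached on $w$ records $([\pi_1(w)]_{\equiv_L},\ldots,[\pi_k(w)]_{\equiv_L})$, and commutativity lets one replace the projections of a witness $w''\in L$ by those of $w$ one coordinate at a time. The only cosmetic difference is that the paper avoids explicitly establishing that $\equiv_L$ is a two-sided congruence, instead commuting the factor to be replaced to the front of the word and applying the Nerode right-congruence directly; both variants rest on the same use of permutation-invariance.
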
 

In general the minimal commutative automaton is not equal to the 
minimal deterministic and complete automaton for a regular commutative language $L$, see Example~\ref{ex:min_aut_vs_min_com_aut}.

\begin{example} 
\label{ex:min_aut_vs_min_com_aut}
For $L = \{ w \in \Sigma^* \mid |w|_a = 0 \mbox{ or } |w|_b > 0 \}$
with $\Sigma = \{a,b\}$ the minimal deterministic
and complete automaton and the minimal commutative automaton are not the same, see Figure~\ref{fig:ex:min_aut_vs_min_com_aut}. This language is from~\cite{GomezA08}.
In fact, the difference can get quite large, as shown by $L_p = \{ w \in \Sigma^* \mid \sum_{j=1}^k j\cdot |w|_{a_j} \equiv 0 \pmod{p} \}$ for a prime $p > k$. Here, $\stc(L_p) = p$, but $\mathcal C_{L_p}$ has $p^k$ states.

\begin{figure}[tb]
\centering
\begin{minipage}[t]{0.45\textwidth}
\scalebox{.89}{
\begin{tikzpicture}[>=latex',shorten >=1pt,node distance=2cm,on grid,auto]
\node[state, accepting, initial] (q1) {$\varepsilon$};
\node[state, right of=q1] (q2) {$a$};
\node[state, accepting, below of=q1] (q3) {$b$};

\path[->] (q1) edge node {$a$} (q2);
\path[->] (q2) edge [loop right] node {$a$}   (q2);
\path[->] (q3) edge [loop left] node {$a,b$} (q3);
\path[->] (q1) edge node {$b$} (q3);
\path[->] (q2) edge node {$b$} (q3);
\end{tikzpicture}
}
\end{minipage}
\begin{minipage}[t]{0.45\textwidth}
\scalebox{.89}{
\begin{tikzpicture}[>=latex',shorten >=1pt,node distance=2cm,on grid,auto]
\node[state, accepting, initial] (q1) {$\varepsilon,\varepsilon$};
\node[state, right of=q1] (q2) {$a,\varepsilon$};
\node[state, accepting, below of=q1] (q3) {$\varepsilon,b$};
\node[state, accepting, right of=q3] (q4) {$a,b$};

\path[->] (q1) edge node {$a$} (q2);
\path[->] (q2) edge [loop right] node {$a$}   (q2);
\path[->] (q3) edge [loop left] node {$b$} (q3);
\path[->] (q4) edge [loop right] node {$a,b$} (q4);
\path[->] (q1) edge node {$b$} (q3);
\path[->] (q2) edge node {$b$} (q4);
\path[->] (q3) edge node {$a$} (q4);
\end{tikzpicture}
}
\end{minipage}
 \caption{The minimal deterministic automaton (left) and the minimal commutative
  automaton (right) of the language $\{ w \in \Sigma^* \mid |w|_a = 0 \mbox{ or } |w|_b > 0 \}$.}
    \label{fig:ex:min_aut_vs_min_com_aut}
\end{figure}
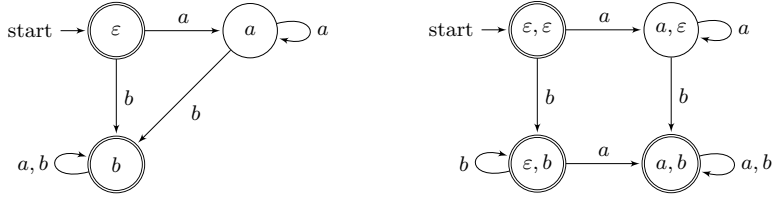
\end{example}

The next definition from~\cite{Hoffmann2021NISextended,DBLP:conf/cai/Hoffmann19} generalizes the notion of a cyclic and non-cyclic part for unary automata
\cite{PighizziniS02},
and the notion of periodic language~\cite{EhrenfeuchtHR83,Hoffmann2021NISextended,DBLP:conf/cai/Hoffmann19}.

\begin{definition}[index and period vector]
\label{def:index_and_period_vector}
  The \emph{index vector} $(i_1, \ldots, i_k)$
  and \emph{period vector} $(p_1, \ldots, p_k)$
  for a commutative regular language $L \subseteq \Sigma^*$ with minimal commutative automaton
  $\mathcal C_L = (\Sigma, S_1 \times \ldots \times S_k, \delta, s_0, F)$
  are the unique minimal numbers such that $\delta(s_0, a_j^{i_j}) = \delta(s_0, a_j^{i_j + p_j})$
  for all $j \in \{1,\ldots,k\}$.
\end{definition} 

Note that, in Definition~\ref{def:index_and_period_vector},
we have, for all $j \in \{1,\ldots,k\}$, $|S_j| = i_j + p_j$. Also note that
for unary languages, i.e., if $|\Sigma| = 1$, $\mathcal C_L$ equals $\mathcal A_L$
and $i_1 + p_1$ equals the number of states of the minimal automaton.

\begin{example}
\label{ex:index_period}
  Let  $L = (aa)^{\ast} \shuffle (bb)^{\ast} \cup (aaaa)^{\ast} \shuffle b^{\ast}$.
 Then $(i_1, i_2) = (0,0)$, $(p_1, p_2) = (4,2)$,
$\pi_1(L) = (a a)^{\ast}$ and $\pi_2(L) = b^{\ast}$.
\end{example}

Let $u, v \in \Sigma^*$.
 Then, $u$ is a \emph{subsequence}\footnote{Also called a \emph{scattered subword}
 in the literature~\cite{DBLP:journals/tcs/GruberHK07,KarandikarNS16}.} of $v$, denoted by $u \preccurlyeq v$,
 if and only if 
 $
   v \in u \shuffle \Sigma^*.
 $
 The thereby given order is called the \emph{subsequence order}.
 Let $L \subseteq \Sigma^*$.
 Then, we define
(1) the \emph{upward closure}
 $\mathop{\uparrow\!} L = L \shuffle \Sigma^* = \{ u \in \Sigma^* : \exists v \in L : v \preccurlyeq u \}$;
(2) the \emph{downward closure} $\mathop{\downarrow\!} L = \{ u \in \Sigma^* : u \shuffle \Sigma^* \cap L \ne \emptyset \} = \{ u \in \Sigma^* : \exists v \in L : u \preccurlyeq v \}$;
(3) the \emph{upward interior}, denoted by $\mathop{\downtouparrow\!} L$,
 as the largest upward-closed set in $L$, i.e. the largest
 subset $U \subseteq L$ such that $\mathop{\uparrow\!} U = U$
 and
(4) the \emph{downward interior}, denoted by $\mathop{\uptodownarrow\!} L$,
 as the largest downward-closed set in $L$, i.e., the largest
 subset $U \subseteq L$ such that $\mathop{\downarrow\!} U = U$.
We have
$
 \mathop{\uptodownarrow\!} L = \Sigma^* \setminus \mathop{\uparrow\!} (\Sigma^* \setminus L)
$
and
$
 \mathop{\downtouparrow\!} L = \Sigma^* \setminus \mathop{\downarrow\!} (\Sigma^* \setminus L).
$


The following two results, which will be needed later, are from~\cite{Hoffmann2021NISextended,DBLP:conf/cai/Hoffmann19}.

\begin{theorem}
\label{thm:sc_shuffle}
Let $U,V \subseteq \Sigma^*$ be commutative regular languages with index
and period vectors $(i_1, \ldots, i_k), (j_1, \ldots, j_k)$
and $(p_1, \ldots, p_k), (q_1, \ldots, q_k)$. Then, the index vector of $U \shuffle V$
is at most 
\[
 (i_1 + j_1 + \lcm(p_1, q_1) - 1, \ldots, i_k + j_k + \lcm(p_k,q_k) - 1)
\]
and the period vector is at most
$
 (\lcm(p_1, q_1), \ldots, \lcm(p_k, q_k)).
$
So, $\stc(U\shuffle V) \le \prod_{l=1}^k (i_l + j_l + 2\cdot \lcm(p_l, q_l) - 1)$.
\end{theorem}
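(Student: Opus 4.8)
The plan is to pass from languages to their Parikh images and reduce the whole statement to a one-dimensional sumset-periodicity argument carried out coordinate by coordinate. Write $\Psi(L) = \{(|w|_{a_1}, \ldots, |w|_{a_k}) : w \in L\} \subseteq \mathbb{N}^k$ for the Parikh image and let $e_j$ denote the $j$-th unit vector. First I would record two consequences of commutativity. Since membership in a commutative language depends only on the Parikh vector and since $uv \in u \shuffle v$ always holds, one checks that $\Psi(U \shuffle V) = \Psi(U) + \Psi(V)$ (Minkowski sum); in particular $U \shuffle V$ is again commutative and, being a shuffle of regular languages, regular. Second, the definition of the index and period vectors together with the implication $\forall j:\pi_j(u)\equiv_U\pi_j(v) \Rightarrow u \equiv_U v$ (established for commutative languages in the appendix material) yields a coordinate-wise ultimate periodicity of the Parikh image: for every $\vec n \in \mathbb{N}^k$ with $n_j \ge i_j$ one has $\vec n \in \Psi(U) \iff \vec n + p_j e_j \in \Psi(U)$, and analogously for $V$ with $j_j, q_j$.

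Next I would translate the index and period of coordinate $j$ of $U \shuffle V$ into a statement about the sumset $S := \Psi(U) + \Psi(V)$. Using $\Psi(U \shuffle V) = S$ and commutativity, $a_j^m \equiv_{U \shuffle V} a_j^{m'}$ holds exactly when $m e_j + \vec t \in S \iff m' e_j + \vec t \in S$ for all $\vec t \in \mathbb{N}^k$, that is, when $S$ has the same membership pattern at levels $m$ and $m'$ of coordinate $j$ for all values of the remaining coordinates. Hence it suffices to show that $S$ is invariant under the shift $+\ell e_j$, where $\ell := \lcm(p_j, q_j)$, on the half-space $\{\vec n : n_j \ge i_j + j_j + \ell - 1\}$; this gives at once that $a_j^m \equiv_{U\shuffle V} a_j^{m+\ell}$ for all $m \ge i_j + j_j + \ell - 1$, whence the index in coordinate $j$ is at most $i_j + j_j + \ell - 1$ and the period divides $\ell$.

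The core step is the sumset argument, which I would run in both directions. For $\vec n \in S$ with $n_j \ge i_j + j_j + \ell - 1$, pick a representation $\vec n = \vec a + \vec b$ with $\vec a \in \Psi(U)$, $\vec b \in \Psi(V)$; since $a_j + b_j = n_j \ge i_j + j_j - 1$ we cannot have both $a_j \le i_j - 1$ and $b_j \le j_j - 1$, so one summand lies in the periodic region of its coordinate and can be raised by $\ell$ (a common multiple of $p_j$ and $q_j$) using coordinate-wise periodicity, giving $\vec n + \ell e_j \in S$. Conversely, for $\vec n + \ell e_j \in S$ with $n_j \ge i_j + j_j + \ell - 1$, a representation $\vec n + \ell e_j = \vec a + \vec b$ must have $a_j \ge i_j + \ell$ or $b_j \ge j_j + \ell$ (otherwise $n_j \le i_j + j_j + \ell - 2$), so the corresponding summand may be lowered by $\ell$ while its $j$-coordinate stays at least $i_j$ resp.\ $j_j$, keeping it in $\Psi(U)$ resp.\ $\Psi(V)$ and yielding $\vec n \in S$. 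I expect this two-directional bookkeeping — in particular pinning the threshold to exactly $i_j + j_j + \ell - 1$ and ensuring the decrement never drops a coordinate below its index — to be the only delicate part of the proof.

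Finally I would assemble the bounds. The argument shows the index vector of $U \shuffle V$ is bounded componentwise by $(i_l + j_l + \lcm(p_l, q_l) - 1)_{l}$ and its period vector by $(\lcm(p_l, q_l))_{l}$. Since $U \shuffle V$ is commutative and regular, Theorem~\ref{thm::min_com_aut} gives $L(\mathcal C_{U \shuffle V}) = U \shuffle V$, so its state complexity is at most the number of states of $\mathcal C_{U \shuffle V}$, namely $\prod_{l=1}^k (I_l + P_l)$ where $(I_l)_l$ and $(P_l)_l$ are the index and period vectors of $U \shuffle V$. Substituting $I_l \le i_l + j_l + \lcm(p_l, q_l) - 1$ and $P_l \le \lcm(p_l, q_l)$ gives $\stc(U \shuffle V) \le \prod_{l=1}^k (i_l + j_l + 2 \lcm(p_l, q_l) - 1)$, as claimed.
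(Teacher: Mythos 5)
Your proof is correct. Note that the paper itself gives no proof of this theorem---it is imported verbatim from the earlier works \cite{Hoffmann2021NISextended,DBLP:conf/cai/Hoffmann19}---so there is nothing in this document to compare against line by line; but your argument is a valid, self-contained derivation and follows the route one would expect: reduce to the Parikh image, observe that commutativity gives $\Psi(U\shuffle V)=\Psi(U)+\Psi(V)$ and that the index/period data of Definition~\ref{def:index_and_period_vector} translates into coordinate-wise ultimate periodicity of $\Psi(U)$ and $\Psi(V)$ above the respective indices, and then run the classical sumset-periodicity computation (the same one that gives index $i+j+\lcm(p,q)-1$ and period $\lcm(p,q)$ for unary concatenation) in each coordinate. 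The two threshold checks are right: in the forward direction $n_j\ge i_j+j_j-1$ forces one summand into its periodic region so it can be raised by $\ell=\lcm(p_j,q_j)$, and in the reverse direction $n_j\ge i_j+j_j+\ell-1$ forces one summand to have $j$-coordinate at least $i_j+\ell$ (resp.\ $j_j+\ell$), so it can be lowered by $\ell$ in steps of $p_j$ (resp.\ $q_j$) without ever dropping below the index. The final assembly via Theorem~\ref{thm::min_com_aut}, bounding $\stc(U\shuffle V)$ by the size $\prod_l(I_l+P_l)$ of $\mathcal C_{U\shuffle V}$ and substituting the componentwise bounds, is also correct.
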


\begin{theorem}
\label{thm:sc_closure_interior}
 Let $\Sigma = \{a_1, \ldots, a_k\}$.
 Suppose $L \subseteq \Sigma^*$ is commutative and regular
 with index vector $(i_1, \ldots, i_k)$
 and period vector $(p_1, \ldots, p_k)$.
 Then,
 $\max\{\stc(\uparrow\! L),\stc(\downarrow\! L),\stc(\downtouparrow\! L ),\stc(\uptodownarrow\! L) \} \le  \prod_{j=1}^k (i_j + p_j)$.
\end{theorem}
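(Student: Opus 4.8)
The plan is to bound, for each of the four result languages, the number of states of its minimal \emph{commutative} automaton coordinate by coordinate, and then to convert this into the claimed bound on $\stc$. Let $M$ be one of $\uparrow\! L$, $\downarrow\! L$, $\downtouparrow\! L$, $\uptodownarrow\! L$. Each of these operations is defined purely through Parikh vectors, so $M$ is commutative, and it is regular (closures of regular languages are regular, and the interiors are complements of closures of the complement). Hence $\mathcal C_M$ is defined, and by Theorem~\ref{thm::min_com_aut} it recognizes $M$; since $\mathcal A_M$ is minimal, $\stc(M) \le \prod_{j=1}^k |S_j^M| = \prod_{j=1}^k (i_j' + p_j')$, where $(i_j')$, $(p_j')$ are the index and period vectors of $M$ and $S_j^M = \{[a_j^m]_{\equiv_M} : m \ge 0\}$, using the note after Definition~\ref{def:index_and_period_vector} that $|S_j^M| = i_j'+p_j'$. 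Thus it suffices to prove, for every coordinate $j$, the coordinatewise bound $i_j' + p_j' = |S_j^M| \le i_j + p_j$. Writing $\psi(w) = (|w|_{a_1},\ldots,|w|_{a_k})$ and recalling that for commutative $L$ one has $w \in \uparrow\! L \Leftrightarrow \psi(v) \le \psi(w)$ for some $v \in L$, and $w \in \downarrow\! L \Leftrightarrow \psi(w) \le \psi(v)$ for some $v \in L$, the whole argument reduces to two \emph{threshold} claims, after which the interiors follow by duality.

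For the downward closure I would fix $j$ and set $D_m = \{x \in \Sigma^* : a_j^m x \in \downarrow\! L\}$, so that $a_j^m \equiv_{\downarrow L} a_j^{m'}$ iff $D_m = D_{m'}$. I claim $D_m = D_{m+1}$ for every $m \ge i_j$. The inclusion $D_{m+1} \subseteq D_m$ is immediate since $\downarrow\! L$ is downward closed; for the reverse, take $x \in D_m$ with a witness $v \in L$, $\psi(x) + m e_j \le \psi(v)$. Then $|v|_{a_j} \ge m \ge i_j$, so by the periodicity built into $\mathcal C_L$ (Definition~\ref{def:index_and_period_vector}) we have $a_j^{|v|_{a_j}} \equiv_L a_j^{|v|_{a_j}+p_j}$, and commutativity gives a word of $L$ with Parikh vector $\psi(v) + p_j e_j$; this vector dominates $\psi(x) + (m{+}1)e_j$, so $x \in D_{m+1}$. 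Consequently the class $[a_j^m]_{\equiv_{\downarrow L}}$ is constant for $m \ge i_j$, so $|S_j^{\downarrow L}| \le i_j + 1 \le i_j + p_j$.

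For the upward closure I would argue symmetrically with $U_m = \{x : a_j^m x \in \uparrow\! L\}$, where now $U_m \subseteq U_{m+1}$ trivially, and I would show $U_{m+1} \subseteq U_m$ for every $m \ge i_j + p_j - 1$. Given $x \in U_{m+1}$ with witness $v \in L$, $\psi(v) \le \psi(x) + (m{+}1)e_j$, either $|v|_{a_j} \le |x|_{a_j} + m$, in which case $v$ already witnesses $x \in U_m$, or $|v|_{a_j} = |x|_{a_j} + m + 1 \ge m+1 \ge i_j + p_j$; in the latter case $a_j^{|v|_{a_j}-p_j} \equiv_L a_j^{|v|_{a_j}}$ (the smaller exponent is still $\ge i_j$), so $L$ contains a word with Parikh vector $\psi(v) - p_j e_j \le \psi(x) + m e_j$, witnessing $x \in U_m$. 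Hence $[a_j^m]_{\equiv_{\uparrow L}}$ is constant for $m \ge i_j + p_j - 1$, giving $|S_j^{\uparrow L}| \le (i_j + p_j - 1) + 1 = i_j + p_j$.

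Finally, for the interiors I would use $\uptodownarrow\! L = \Sigma^* \setminus \uparrow\!(\Sigma^* \setminus L)$ and $\downtouparrow\! L = \Sigma^* \setminus \downarrow\!(\Sigma^* \setminus L)$ together with the elementary fact that $\equiv_{\Sigma^* \setminus N} = \equiv_N$ for any $N$; thus complementation leaves $\mathcal C_{(\cdot)}$ and the index/period vectors unchanged, and in particular $\Sigma^*\setminus L$ has the same vectors $(i_j),(p_j)$ as $L$. Applying the two closure bounds to $\Sigma^*\setminus L$ and complementing yields $|S_j^{\uptodownarrow L}| \le i_j + p_j$ and $|S_j^{\downtouparrow L}| \le i_j + 1 \le i_j + p_j$. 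Taking the product over $j$ in each of the four cases gives $\stc(M) \le \prod_{j=1}^k (i_j + p_j)$. The step I expect to be the crux is locating the two thresholds exactly ($m \ge i_j$ for $\downarrow$ versus $m \ge i_j + p_j - 1$ for $\uparrow$) and verifying that the shifted witness $\psi(v) \pm p_j e_j$ still lies in $L$: this is precisely where the rule ``$a_j^m \equiv_L a_j^{m+p_j}$ iff $m \ge i_j$'' must be invoked with the correct inequality, and it is the natural place to make an off-by-one error.
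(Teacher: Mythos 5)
Your proof is correct. Note first that the paper does not actually prove Theorem~\ref{thm:sc_closure_interior}: it is imported verbatim from the cited prior work (\cite{Hoffmann2021NISextended,DBLP:conf/dcfs/Hoffmann21}), so there is no in-paper argument to compare against. Your reduction is the natural one and all the steps check out: $\stc(M)\le\prod_j|S_j^M|$ via Theorem~\ref{thm::min_com_aut} applied to the (commutative, regular) result language $M$; the coordinatewise threshold claims for $\downarrow$ and $\uparrow$; and the interiors by complementation, using that $\equiv_N$ and $\equiv_{\Sigma^*\setminus N}$ coincide, so $\Sigma^*\setminus L$ has the same index and period vectors as $L$. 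The two thresholds are placed correctly: for $\downarrow\! L$ the classes $[a_j^m]_{\equiv_{\downarrow L}}$ stabilize already at $m=i_j$ (giving the stronger bound $i_j+1$), while for $\uparrow\! L$ you need $m\ge i_j+p_j-1$ so that $|v|_{a_j}\ge m+1\ge i_j+p_j$ keeps the shifted exponent $|v|_{a_j}-p_j$ at least $i_j$; both inequalities are verified correctly, and $p_j\ge 1$ covers the comparison with $i_j+p_j$. Two places deserve one explicit line each. First, the periodicity fact you invoke, $a_j^m\equiv_L a_j^{m+p_j}$ for all $m\ge i_j$, follows from Definition~\ref{def:index_and_period_vector} because $\delta(s_0,a_j^{i_j})=\delta(s_0,a_j^{i_j+p_j})$ means $[a_j^{i_j}]_{\equiv_L}=[a_j^{i_j+p_j}]_{\equiv_L}$ and $\equiv_L$ is a right-congruence. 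Second, the step ``$L$ contains a word with Parikh vector $\psi(v)\pm p_je_j$'' should be justified without appealing to the syntactic congruence: take $z=\pi_1(v)\cdots\pi_{j-1}(v)\pi_{j+1}(v)\cdots\pi_k(v)$, observe $a_j^{|v|_{a_j}}z\in L$ by commutativity of $L$, and then $a_j^{|v|_{a_j}\pm p_j}z\in L$ by the definition of $\equiv_L$; this uses only the right-congruence and the two directions of the equivalence. With those two lines added the argument is complete and self-contained.
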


\section{Product-Form Minimal Automata}
\label{sec:product-form}

As shown in Example~\ref{ex:min_aut_vs_min_com_aut}, the minimal automaton, in general, 
does not equal 
the minimal commutative automaton. Here, we introduce the class of commutative regular 
languages for which both are isomorphic. The corresponding
commutative languages are called \emph{languages with a minimal automaton
of product-form}, as the minimal commutative automaton is built with the Cartesian product.

\begin{definition}[languages with product-form minimal automaton]
 A commutative and regular language $L \subseteq \Sigma^*$
 is said to have a \emph{minimal automaton of product-form},
 if $\mathcal C_L$ is isomorphic to $\mathcal A_L$.
\end{definition}

If $|\Sigma| = 1$, we see easily that $\mathcal C_L$ is the minimal deterministic and complete automaton.

\begin{proposition}
\label{prop:unary_min_prod_form}
 If $|\Sigma| = 1$, then each commutative and regular $L \subseteq \Sigma^*$
 has a minimal automaton of product-form.
 More generally, if $L \subseteq\{a\}^*$, then $L \shuffle (\Sigma\setminus\{a\})^*$
 has a minimal automaton of product-form.
\end{proposition}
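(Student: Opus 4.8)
The plan is to reduce both statements to a single fact: that for $K := L \shuffle (\Sigma\setminus\{a\})^*$ the minimal commutative automaton $\mathcal{C}_K$ is \emph{itself} a minimal deterministic automaton, i.e.\ all of its states are reachable and pairwise distinguishable. By the standard characterization of the minimal DFA up to isomorphism, this gives $\mathcal{C}_K \cong \mathcal{A}_K$, which is precisely the product-form property. The first assertion is the special case $\Sigma = \{a\}$: then $(\Sigma\setminus\{a\})^* = \{\varepsilon\}$, so $K = L$, and in fact for $|\Sigma| = 1$ one reads off directly from Definition~\ref{def::min_com_aut} that $\mathcal{C}_L = \mathcal{A}_L$, since $\pi_1 = \mathrm{id}$, the set $S_1$ consists exactly of the Nerode classes of $L$, and the transition, start, and final structures coincide. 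Hence it suffices to prove the general claim.

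First I would fix $a = a_1$ (without loss of generality, as reordering the alphabet only permutes coordinates of $\mathcal{C}_K$) and record the membership description $K = \{\,u \in \Sigma^* : a^{|u|_{a_1}} \in L\,\}$, so that $K$ is commutative and regular and Theorem~\ref{thm::min_com_aut} applies. From this description I would compute the components $S_j$ of $\mathcal{C}_K$. For $j \ge 2$ inserting or deleting $a_j$ never alters the $a_1$-count, so $a_j^m \equiv_K a_j^{m'}$ for all $m,m'$ and $S_j$ is a singleton; for $j = 1$ a short left-quotient computation shows $a_1^m \equiv_K a_1^{m'}$ if and only if $a^{m+n} \in L \Leftrightarrow a^{m'+n}\in L$ for every $n \ge 0$, i.e.\ exactly when $a^m \equiv_L a^{m'}$ in the unary language $L$. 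Reachability is immediate: by the definition of the transition function each coordinate of $\mathcal{C}_K$ advances only on its own letter, so the state $([a_1^{m_1}]_{\equiv_K},\ldots,[a_k^{m_k}]_{\equiv_K})$ is reached by reading $a_1^{m_1}\cdots a_k^{m_k}$.

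For distinguishability I would first note that, since $L(\mathcal{C}_K) = K$, the state reached by $a_1^{m_1}\cdots a_k^{m_k}$ is final if and only if this word lies in $K$, i.e.\ if and only if $a^{m_1} \in L$; thus finality of a state is governed solely by its first component. Consequently, if two states differ then, by the singleton computation above, they must differ in the first coordinate, say $a_1^{m} \not\equiv_K a_1^{m'}$. Choosing a separating word $x$ and setting $n = |x|_{a_1}$ yields $a^{m+n}\in L \not\Leftrightarrow a^{m'+n}\in L$, and reading $a_1^{\,n}$ (which only advances the first coordinate) sends the two states to states of differing finality, hence separates them. This shows $\mathcal{C}_K$ is reachable and reduced, and therefore isomorphic to $\mathcal{A}_K$, completing the proof.

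I expect the main obstacle to be exactly this distinguishability step, where the abstract inequivalence of the first components must be turned into a concrete separating word inside the product automaton. The two points that make it work are that finality is controlled only by the $a_1$-component and that any separating word may be replaced by the pure power $a_1^{\,n}$ with the same $a_1$-count; once these are isolated, the collapse of the non-$a$ components and the reachability argument are routine consequences of the membership description of $K$.
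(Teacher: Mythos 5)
Your proof is correct. The paper in fact gives no proof of this proposition at all -- it is stated as an immediate observation (prefaced only by ``If $|\Sigma|=1$, we see easily that $\mathcal C_L$ is the minimal deterministic and complete automaton'') -- so there is no argument of the author's to compare against; your write-up supplies exactly the details being taken for granted. The key reductions all check out: the membership description $K=\{u: a^{|u|_a}\in L\}$ makes $K$ commutative and regular, the components $S_j$ for $a_j\neq a$ collapse to singletons, $a^m\equiv_K a^{m'}$ iff $a^m\equiv_L a^{m'}$, and reachability plus the finality-via-first-coordinate observation give that $\mathcal C_K$ is reduced, hence isomorphic to $\mathcal A_K$. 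This is the natural direct verification and I see no gap.
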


Apart from the unary languages, we give another example 
of a language with minimal automaton of product-form next.



\begin{example}  \label{ex::comm_aut} 
 Let $L = (a a)^{\ast} \shuffle (b b)^{\ast} \cup (a a a)^{\ast} \shuffle b(b b)^{\ast}$ over $\Sigma = \{a,b\}$. 
 See Figure~\ref{fig::example} for the minimal commutative automaton.
 Here, the minimal commutative
 automaton equals the minimal automaton.
  
\begin{figure}[htb]
\begin{center} 
\scalebox{.89}{
\begin{tikzpicture}[>=latex',shorten >=1pt,node distance=2cm,on grid,auto]
 \node[state, initial, accepting] (1) {1};
 \node[state]                     (2) [right of=1] {};
 \node[state,accepting]           (3) [right of=2] {2};
 \node[state]                     (4) [right of=3] {};
 \node[state,accepting]           (5) [right of=4] {3};
 \node[state]                     (6) [right of=5] {};
 
 \node[state,accepting]           (7)  [below of=1] {4};
 \node[state,]                    (8)  [right of=7] {};
 \node[state,]                    (9)  [right of=8] {};
 \node[state,accepting]           (10) [right of=9] {5};
 \node[state,]                    (11) [right of=10] {};
 \node[state,]                    (12) [right of=11] {};
 
 \path[->] (1) edge node {$a$} (2);
 \path[->] (2) edge node {$a$} (3);
 \path[->] (3) edge node {$a$} (4);
 \path[->] (4) edge node {$a$} (5);
 \path[->] (5) edge node {$a$} (6);
 \path[->] (6) edge [bend right=20] node {$a$} (1);
 
 \path[->] (7) edge node {$a$} (8);
 \path[->] (8) edge node {$a$} (9);
 \path[->] (9) edge node {$a$} (10);
 \path[->] (10) edge node {$a$} (11);
 \path[->] (11) edge node {$a$} (12);
 \path[->] (12) edge [bend left=20,above] node {$a$} (7);
 
 \path[->] (1) edge [bend left=20] node  {$b$} (7);
 \path[->] (7) edge [bend left=20] node  {$b$} (1);
 
 \path[->] (2) edge [bend left=20] node  {$b$} (8);
 \path[->] (8) edge [bend left=20] node  {$b$} (2);
 
 \path[->] (3) edge [bend left=20] node  {$b$} (9);
 \path[->] (9) edge [bend left=20] node  {$b$} (3);
 
 \path[->] (4) edge [bend left=20] node  {$b$} (10);
 \path[->] (10) edge [bend left=20] node  {$b$} (4);
 
 \path[->] (5)  edge [bend left=20] node  {$b$} (11);
 \path[->] (11) edge [bend left=20] node  {$b$} (5);
 
 \path[->] (6)  edge [bend left=20] node  {$b$} (12);
 \path[->] (12) edge [bend left=20] node  {$b$} (6);
\end{tikzpicture}}
\end{center}
 \caption{$\mathcal C_L$ for $L = (a a)^{\ast} \shuffle (b b)^{\ast} \cup (a a a)^{\ast} \shuffle b (b b)^{\ast}$. Here $\mathcal C_L$ is isomorphic to $\mathcal A_L$.} 
    \label{fig::example}
\end{figure}
\end{example}

However, the next proposition gives a strong necessary criterion
for a commutative language to have a minimal automaton of product-form.\todo{oder product-type?}

\begin{propositionrep}
\label{prop:at_most_one_projection_finite}
 If $L \subseteq \Sigma^*$ is commutative and regular
 with a minimal automaton of product-form,
 then 
 $|\{ x \in \Sigma \mid \pi_{\{x\}}(L) \mbox{ is finite } \}| \le 1$.
 So, $\pi_{\Gamma}(L)$ is infinite for $|\Gamma| \ge 2$, in particular no finite language over
 an at least binary alphabet is in this class.
\end{propositionrep}
\begin{proof}
 Suppose we have two distinct $j, j' \in \{1,\ldots,k\}$
 such that $\pi_j(L)$ and $\pi_{j'}(L)$ are finite.
 Set $N = \max\ + 1$.
 Then $a_j^N \equiv_L a_{j'}^N$
 and, as $N > 0$, this implies that the minimal 
 commutative automaton has strictly more states than the minimal deterministic automaton. 
 
 For the last sentence, note that if $a \in \Gamma$, then
 $\pi_{\{a\}}(L) = \pi_{\{a\}}(\pi_{\Gamma}(L))$. Hence, if $\pi_{\Gamma}(L)$
 is finite with $|\Gamma| \ge 2$, then at least two one-letter projection
 languages would be finite too. Hence, with the previous claim, if $L$
 is commutative and regular, it does not have a minimal automaton of product-form in this case. \qed
\end{proof}

For example, $L = \{\varepsilon\}$ over $\Sigma$
does not have a minimal automaton of product-form if $|\Sigma| > 1$.
Recall that the minimal automaton, as defined here, is always complete.
Note that the converse of Proposition~\ref{prop:at_most_one_projection_finite}
is not true, as shown by $aa^*$ over $\Sigma = \{a,b\}$.

In the following statement, we give alternative characterizations
for commutative languages with minimal automata of product-form.

\begin{theoremrep}
\label{thm:product-form_characterizations}
 Let $L \subseteq \Sigma^*$ be a commutative regular language
 with index vector $(i_1, \ldots, i_k)$
 and period vector $(p_1, \ldots, p_k)$.
 The following are equivalent:
 \begin{enumerate}
 \item the minimal automaton has product-form;
 
 \item $\stc(L) = \prod_{j=1}^k (i_j + p_j)$;
 
  
  
 \item $u \equiv_L v$ implies $\forall a \in \Sigma : a^{|u|_a} \equiv_L a^{|v|_a}$;
 
 \item $u \equiv_L v$ if and only if $\forall a \in \Sigma : a^{|u|_a} \equiv_L a^{|v|_a}$.
\end{enumerate}
\end{theoremrep}
\begin{proof}
 First, suppose $\mathcal C_L$ is isomorphic to $\mathcal A_L$.
 Then, by the definition of the state complexity, we find $\stc(L) = \prod_{j=1}^k (i_j + p_j)$.
 Conversely, suppose $\stc(L) = \prod_{j=1}^k (i_j + p_j)$.
 As $L$ is commutative, we have $L = L(\mathcal C_L)$.
 Every automaton recognizing $L$ in which all states are reachable
 from the start state can be mapped surjectively onto $\mathcal A_L$, see~\cite{Hopcroft:1971,DBLP:books/daglib/0088160}.
 In particular, this holds true for $\mathcal C_L$.
 By finiteness, as both have the same number of states, this must be an isomorphism.
 Hence, the first two conditions are equivalent
 
 That the last condition is equivalent to the first
 is shown, in a more general context without referring to any previous
 result, in Theorem~\ref{thm:alternative_characterizations},
 as the case of commutative languages corresponds
 to the case $\Sigma_1 = \{a_1\}, \ldots, \Sigma_k = \{a_k\}$
 with the notation from Section~\ref{subsec:subclasses}.\qed
%
 
 That the third condition is equivalent to the last condition 
 follows as for a commutative language we
 have $[u]_{\equiv_L} = [a_1^{|u|_{a_1}} \cdots a_k^{|u|_{a_k}}]_{\equiv_L}$,
 so if $\forall a \in \Sigma : a^{|u|_a} \equiv_L a^{|v|_a}$,
 then $u \equiv_L v$ is always valid for commutative languages.\qed
\end{proof}

Next, we give a way to construct commutative regular languages with minimal automata
of product-form.

\begin{toappendix}
 Let $L \subseteq \Sigma^*$. Next, we will need the following equations:
 \begin{equation}\label{eqn:word_in_shuffle_lang}
  u \in \bigshuffle_{i=1}^k \pi_i(L) \Leftrightarrow \forall i \in \{1,\ldots,k\} : \pi_i(u) \in \pi_i(L),
 \end{equation}
 and
 \begin{equation}\label{eqn:L_in_shuffle_L}
     L \subseteq \bigshuffle_{i=1}^k \pi_i(L).
 \end{equation}

\begin{lemma}
\label{lem:union_shuffle_projection_lang}
 Let $L \subseteq \Sigma^*$ be a commutative language
 with minimal commutative automaton $\mathcal C_L = (\Sigma, S_1 \times \ldots\times S_k, \delta, s_0, F)$,
 index vector $(i_1, \ldots, i_k)$ and period vector $(p_1, \ldots, p_k)$.
 Set~$n = |F|$. Suppose $F = \{ (s_1^{(l)}, \ldots, s_k^{(l)}) \mid l \in \{ 1,\ldots, n \} \}$
 and set, for $l \in \{1,\ldots, n\}$, $L^{(l)} = \{ w \in \Sigma^* \mid \delta(s_0, w) = (s_1^{(l)}, \ldots, s_k^{(l)}) \}$, 
 and $L_j^{(l)} = \pi_j(L^{(l)})$, $j \in \{1,\ldots,k\}$.
 Then,
 \[
  L = \bigcup_{l = 1}^{n} L^{(l)} \mbox{ and } L^{(l)} = \bigshuffle_{j=1}^k L_j^{(l)}.
 \]
 For $j \in \{1,\ldots,k\}$, set $\mathcal C_{L, \{a_j\}} = (\{a_j\}, S_j, \delta_j, [\varepsilon]_{\equiv_L}, F_j)$
 with
 \[ 
  \delta_j([a_j^m]_{\equiv_L}, a_j)  = [a_j^{m+1}]_{\equiv_L} \mbox{ and }
  F_j  = \{ s_j^{(1)}, \ldots, s_j^{(n)} \}
 \]
 for $[a_j^m]_{\equiv_L} \in S_j$, $m \ge 0$.
 Then, 
 \[ 
 L_j^{(l)} = L((\{a_j\}, S_j, \delta_j, [\varepsilon]_{\equiv_L}, \{s_j^{(l)}\})
 \mbox{ and }
 L(\mathcal C_{L, \{a_j\}}) = \pi_j(L) = \bigcup_{l=1}^{n} L_j^{(l)}. 
 \]
 for $l \in \{1,\ldots,n\}$ and the automata $\mathcal C_{L,\{a_j\}}$
 have index $i_j$ and period $p_j$.
\end{lemma}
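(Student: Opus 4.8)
The plan is to exploit the product structure of $\mathcal C_L$ recorded in Definition~\ref{def::min_com_aut} and Remark~\ref{rem:equal_states_C_L}. The starting observation is that, by the componentwise action of the transition function, the $j$-th coordinate of $\delta(s_0, w)$ is exactly $[a_j^{|w|_{a_j}}]_{\equiv_L} = [\pi_j(w)]_{\equiv_L}$; hence $\delta(s_0, w) = ([\pi_1(w)]_{\equiv_L}, \ldots, [\pi_k(w)]_{\equiv_L})$. From this, $w \in L^{(l)}$ holds if and only if $[\pi_j(w)]_{\equiv_L} = s_j^{(l)}$ for every $j$. The identity $L = \bigcup_{l=1}^n L^{(l)}$ is then immediate: since $L$ is commutative we have $L = L(\mathcal C_L)$ by Theorem~\ref{thm::min_com_aut}, so $L$ is the set of words whose run ends in some final state, and the sets $L^{(l)}$ simply partition the accepting runs according to which of the $n$ final states is reached.

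The core step is the shuffle decomposition $L^{(l)} = \bigshuffle_{j=1}^k L_j^{(l)}$. First I would pin down the projections by proving $L_j^{(l)} = \{ a_j^m \mid [a_j^m]_{\equiv_L} = s_j^{(l)} \}$. The inclusion ``$\subseteq$'' is immediate from the coordinate description above. For ``$\supseteq$'' I need, given $a_j^m$ with $[a_j^m]_{\equiv_L} = s_j^{(l)}$, a genuine word $w \in L^{(l)}$ whose $j$-th projection is $a_j^m$; I would build it by choosing, for each coordinate $j' \ne j$, an exponent $m_{j'}$ with $[a_{j'}^{m_{j'}}]_{\equiv_L} = s_{j'}^{(l)}$ (possible because every state of $S_{j'}$ is a power of $a_{j'}$, the target state $s^{(l)}$ being reachable) and setting $w = a_1^{m_1} \cdots a_j^m \cdots a_k^{m_k}$. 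Then $w$ matches $s^{(l)}$ in every coordinate, so $w \in L^{(l)}$ and $\pi_j(w) = a_j^m$. Since each $L_j^{(l)}$ is a subset of $a_j^*$, the single-letter version of the shuffle-projection equivalence~\eqref{eqn:word_in_shuffle_lang} gives $u \in \bigshuffle_{j=1}^k L_j^{(l)}$ iff $\pi_j(u) \in L_j^{(l)}$ for all $j$, iff $[\pi_j(u)]_{\equiv_L} = s_j^{(l)}$ for all $j$, iff $u \in L^{(l)}$. I expect this witness construction for the reverse inclusion to be the only genuine obstacle: it is exactly where the coordinatewise independence of the product automaton is used, letting us fix the $j$-th exponent while realising the remaining coordinates freely.

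The remaining assertions are bookkeeping. The unary automaton $(\{a_j\}, S_j, \delta_j, [\varepsilon]_{\equiv_L}, \{s_j^{(l)}\})$ accepts $a_j^m$ exactly when $\delta_j([\varepsilon]_{\equiv_L}, a_j^m) = [a_j^m]_{\equiv_L} = s_j^{(l)}$, which is the set just identified as $L_j^{(l)}$, giving the first displayed equality. Taking the union over $l$ and using that $\pi_j$ commutes with unions together with $L = \bigcup_l L^{(l)}$ yields $L(\mathcal C_{L,\{a_j\}}) = \bigcup_{l=1}^n L_j^{(l)} = \pi_j(L)$, since $F_j = \{s_j^{(1)}, \ldots, s_j^{(n)}\}$. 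Finally, the index and period of $\mathcal C_{L,\{a_j\}}$ are $i_j$ and $p_j$ directly by Definition~\ref{def:index_and_period_vector}: that definition fixes $i_j, p_j$ as the least numbers with $[a_j^{i_j}]_{\equiv_L} = [a_j^{i_j + p_j}]_{\equiv_L}$, which are precisely the tail and cycle lengths of the unary transition structure $(S_j, \delta_j)$ underlying $\mathcal C_{L,\{a_j\}}$.
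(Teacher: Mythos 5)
Your proposal is correct and follows essentially the same route as the paper: both rest on the componentwise description $\delta(s_0,w)=([\pi_1(w)]_{\equiv_L},\ldots,[\pi_k(w)]_{\equiv_L})$, both establish the shuffle decomposition via the projection criterion for membership in a shuffle of unary languages, and both handle the key reverse inclusion by the same witness construction (concatenating representative powers $a_{j'}^{m_{j'}}$ for the other coordinates, which is the paper's word $\pi_1(w)\cdots\pi_{j-1}(w)\,u\,\pi_{j+1}(w)\cdots\pi_k(w)$ in slight disguise). The only difference is organizational — you characterize $L_j^{(l)}$ explicitly first and derive everything else from that — which is fine.
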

\begin{proof}
 As every accepted word drives $\mathcal C_L$ into some final state,
 we have $L = \bigcup_{l = 1}^{n} L^{(l)}$. Next, we show the other claims.
 
 \begin{claiminproof}
  Let $l \in \{1,\ldots,n\}$. Then $L^{(l)} = \bigshuffle_{j=1}^k L_j^{(l)}$.
 \end{claiminproof}
 \begin{claimproof}
 By Equation~\eqref{eqn:L_in_shuffle_L}, we have $L^{(l)} \subseteq \bigshuffle_{j=1}^k L_j^{(l)}$.
 If $w \in \bigshuffle_{j=1}^k L_j^{(l)}$,
 then, for $j \in \{1,\ldots,k\}$, there exists $u_j \in L_j^{(l)}$
 such that $\pi_j(w) = \pi_j(u_j)$.
 Hence, by Remark~\ref{rem:equal_states_C_L}, $\delta(s_0, w) = \delta(s_0, u_1 \cdots u_j) = (s_1^{(l)}, \ldots, s_k^{(l)})$,
 and so $w \in L^{(l)}$.
 \end{claimproof}
 
 \begin{claiminproof}
  Let $l \in \{1,\ldots,n\}$ and $j \in \{1,\ldots,k\}$. Then 
  \[ L_j^{(l)} = L((\{a_j\}, S_j, \delta_j, [\varepsilon]_{\equiv_L}, \{s_j^{(l)}\}).
  \]
 \end{claiminproof}
 \begin{claimproof}
  By Remark~\ref{rem:equal_states_C_L}, for $u \in a_j^*$, 
 \begin{align*}
     & \delta_j([\varepsilon]_{\equiv_L}, u) = s_j^{(l)} \\
      & \Leftrightarrow [u]_{\equiv_L} = s_j^{(l)} \\
      & \Leftrightarrow \exists w \in L^{(l)} : u \equiv_L \pi_j(w) \\
      & \Leftrightarrow \exists w \in L^{(l)} : \delta(s_0, \pi_1(w)\cdots \pi_{j-1}(w) u \pi_{j+1}(w) \cdots \pi_k(w)) = \delta(s_0, w) \\
      & \Leftrightarrow \exists w \in \Sigma^* : \delta(s_0, \pi_1(w)\cdots \pi_{j-1}(w) u \pi_{j+1}(w) \cdots \pi_k(w)) = (s_1^{(l)}, \ldots, s_k^{(l)}) \\
      & \Leftrightarrow u \in L_j^{(l)}. 
 \end{align*}  
  And this shows the claim.
 \end{claimproof}
 
 As $F_j = \{s_j^{(1)},\ldots, s_j^{(n)}\}$, we
 find $L(\mathcal C_{L,\{a_j\}}) = \bigcup_{l=1}^n L_j^{(l)}$
 and, with the previous equations and $L_j^{(l)} \subseteq a_j^*$,
 \[
  \pi_j(L) = \pi_j\left( \bigcup_{l=1}^{n} \bigshuffle_{i=1}^k L_i^{(l)} \right) 
           = \bigcup_{l=1}^{n}  \pi_j\left(  \bigshuffle_{i=1}^k L_i^{(l)} \right) 
           = \bigcup_{l=1}^{n} L_j^{(l)}.
 \]
 Lastly, that the automata $\mathcal C_{L,\{a_j\}}$ have index $i_j$
 and period $p_j$ is implied as the Nerode right-congruence classes $[a_j^m]_{\equiv_L}$, $m \ge 0$
 are the states of these automata. So, we have shown all equations in the statement.~\qed
\end{proof}

\begin{lemma}
\label{lem::nerode_on_projections}
 Let $L \subseteq \Sigma^*$ be a language with $L = \bigshuffle_{i=1}^k \pi_i(L)$. Then,
 for each $j \in \{1,\ldots k\}$ and $n,m\ge 0$, we have
 $
  a_j^m \equiv_{L} a_j^n
   \Leftrightarrow a_j^m \equiv_{\pi_j(L)} a_j^n,
 $
 where on the right side the equivalence is considered with
 respect to the unary alphabet $\{a_j\}$.
\end{lemma}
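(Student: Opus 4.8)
The plan is to reduce everything to the membership characterization that the hypothesis provides. Combining the assumption $L = \bigshuffle_{i=1}^k \pi_i(L)$ with Equation~\eqref{eqn:word_in_shuffle_lang} yields, for every $u \in \Sigma^*$,
\[
 u \in L \iff \forall i \in \{1,\ldots,k\} : \pi_i(u) \in \pi_i(L).
\]
Since $\pi_j$ is a homomorphism, for any $x \in \Sigma^*$ one has $\pi_j(a_j^m x) = a_j^{m + |x|_{a_j}}$ while $\pi_i(a_j^m x) = \pi_i(x)$ for $i \neq j$. Writing $P(s)$ for the predicate ``$a_j^s \in \pi_j(L)$'' and $Q(x)$ for ``$\pi_i(x) \in \pi_i(L)$ for all $i \neq j$'', the characterization becomes $a_j^m x \in L \iff P(m + |x|_{a_j}) \wedge Q(x)$. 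In this language, $a_j^m \equiv_L a_j^n$ says that for all $x \in \Sigma^*$ we have $P(m+|x|_{a_j}) \wedge Q(x) \iff P(n+|x|_{a_j}) \wedge Q(x)$, whereas $a_j^m \equiv_{\pi_j(L)} a_j^n$ says that for all $t \ge 0$ we have $P(m+t) \iff P(n+t)$ (the continuations in the unary congruence range exactly over the words $a_j^t$).

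For the direction ``$\Leftarrow$'' I would take an arbitrary test word $x \in \Sigma^*$ and set $t = |x|_{a_j}$; the unary hypothesis gives $P(m+t) \iff P(n+t)$, and conjoining both sides with the common term $Q(x)$ yields the required equivalence $a_j^m x \in L \iff a_j^n x \in L$. This direction is routine and needs no case distinction.

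The direction ``$\Rightarrow$'' is where the real work lies, and I expect it to be the single subtle point. Given a target exponent $t \ge 0$, to read off $P(m+t) \iff P(n+t)$ from the $L$-congruence I must supply a test word $x$ with $|x|_{a_j} = t$ for which $Q(x)$ holds, so that the $Q$-term does not annihilate the information. The obstacle is that no pure power $a_j^t$ need satisfy $Q$ --- precisely when $\varepsilon \notin \pi_i(L)$ for some $i \neq j$, in which case $L$ contains no word whose only letter is $a_j$, and the naive choice $x = a_j^t$ yields the uninformative equivalence $\mathrm{false} \iff \mathrm{false}$. I would resolve this by fixing a witness $w_0 \in L$ (assuming $L \neq \emptyset$) and taking $x = a_j^t w_0'$, where $w_0'$ is $w_0$ with all its occurrences of $a_j$ deleted; then $\pi_i(x) = \pi_i(w_0) \in \pi_i(L)$ for every $i \neq j$ by the characterization applied to $w_0$, so $Q(x)$ holds, while $|x|_{a_j} = t$. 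Plugging this $x$ into the $L$-congruence collapses the $Q$-terms and delivers $P(m+t) \iff P(n+t)$ for every $t$, which is precisely $a_j^m \equiv_{\pi_j(L)} a_j^n$.

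The only gap left is the degenerate case $L = \emptyset$, where no witness $w_0$ exists; but then $\pi_j(L) = \emptyset$ as well, so both congruences are the universal relation and the stated equivalence holds trivially. Everything else is bookkeeping with the homomorphism property of $\pi_j$ and the membership characterization above.
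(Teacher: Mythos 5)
Your proof is correct and follows essentially the same route as the paper's: both reduce to the membership characterization $u \in L \Leftrightarrow \forall i\, \pi_i(u) \in \pi_i(L)$ and both handle the key direction by padding the test word with witnesses from the other projections $\pi_i(L)$, $i \neq j$, so that the conjunction over $i \neq j$ is satisfied and the $j$-th component can be read off. Your explicit treatment of the degenerate case $L = \emptyset$ is a small extra care the paper leaves implicit (there the implication is vacuous since its antecedent $\pi_j(a_j^m x) \in \pi_j(L)$ already forces $L \neq \emptyset$), but otherwise the two arguments coincide.
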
 
\begin{proof} 
 In the next equations, we will write ``$\Leftrightarrow$'' to mean two formulas are semantically equivalent, and
 ``$\leftrightarrow$'' as an equivalence in the formula itself.
 
 By assumption and Equation~\eqref{eqn:word_in_shuffle_lang}, we have 
 \begin{equation} \label{eqn:nerode_on_projections}
     w \in L \Leftrightarrow \forall r \in \{1,\ldots k \} : \pi_r(w) \in \pi_r(L).
 \end{equation}

 Now, let $j \in \{1,\ldots k\}$ be fixed and $n,m \ge 0$. We have
 \begin{equation*}
  a_j^m \equiv_L a_j^n  \Leftrightarrow \forall x \in \Sigma^{\ast}  : a_j^m x \in L \leftrightarrow a_j^n x \in L.
 \end{equation*}
 By Equation~\eqref{eqn:nerode_on_projections}, this is equivalent to 
 \begin{multline}\label{eqn:forall_r} 
      \forall x \in \Sigma^{\ast}  : ( \forall r \in \{1,\ldots,k\} : \pi_r(a_j^m x) \in \pi_r(L) ) \\
                           \leftrightarrow ( \forall r \in \{1,\ldots,k\} : \pi_r(a_j^n x) \in \pi_r(L) ).
 \end{multline}
 
 \begin{claiminproof}
  Equation~\eqref{eqn:forall_r} is equivalent to
  \[ 
   \forall x \in \Sigma^{\ast}  : \pi_j(a_j^m x) \in \pi_j(L) \leftrightarrow \pi_j(a_j^n x) \in \pi_j(L).
  \]
 \end{claiminproof}
 \begin{claimproof}
  First, suppose Equation~\eqref{eqn:forall_r} holds true
  and suppose, for $x \in \Sigma^*$,
  we have $\pi_j(a_j^m x) \in \pi_j(L)$.
  Choose, for $r \ne j$, words $u_r \in \pi_r(L)$, which is possible
  as $L = \bigshuffle_{r=1}^k \pi_r(L)$.
  Set $u_j = \pi_j(x)$ and $y = u_1 \cdots u_k$.
  Then, for all $r \in \{1,\ldots,k\}$,
  as $u_r \in a_r^*$, and hence
  \[
   \pi_r(a_j^my) = \left\{ 
   \begin{array}{ll}
    \pi_r(a_j^mx) & \mbox{if } r = j; \\ 
    \pi_r(u_r)    & \mbox{if } r \ne j,
   \end{array}
   \right.
  \]
  we have $\pi_r(a_j^m y) \in \pi_r(L)$.
  So, by Equation~\eqref{eqn:forall_r},
  we can deduce that, for all $r \in \{1,\ldots,k\}$,
  we have $\pi_r(a_j^n y) \in \pi_r(L)$.
  In particular, we find $\pi_j(a_j^n y) \in \pi_j(L)$.
  But $\pi_j(a_j^ny) = \pi_j(a_j^n)\pi_j(y) = \pi_j(a_j^n)\pi_j(x) = \pi_j(a_j^nx)$.
  So, $\pi_j(a_j^nx) \in \pi_j(L)$.
  Similarly, we can show that, for all $x \in \Sigma^*$,
  $\pi_j(a_j^nx) \in \pi_j(L)$ implies $\pi_j(a^mx) \in \pi_j(L)$.

  Conversely, now suppose, for each $x \in \Sigma^*$, we have
  \[
    \pi_j(a_j^m x) \in \pi_j(L) \leftrightarrow \pi_j(a_j^n x) \in \pi_j(L).
  \]
  Let $x \in \Sigma^*$ and suppose, for all $r \in \{1,\ldots, k\}$,
  we have $\pi_r(a_j^mx) \in \pi_r(L)$.
  As, for $r \ne j$, 
  we have $\pi_r(a_j^mx) = \pi_r(x) = \pi_r(a_j^nx)$
  and, as, by assumption, the condition $\pi_j(a_j^mx) \in \pi_j(L)$
  implies $\pi_j(a_j^nx) \in \pi_j(L)$,
  we find that, for all $r \in \{1,\ldots,k\}$,
  we have $\pi_r(a_j^nx) \in \pi_r(L)$.
  Similarly, we can show the other
  implication in Equation~\eqref{eqn:forall_r}.
 \end{claimproof}

 So, by the previous claim, Equation~\eqref{eqn:forall_r} simplifies to
 \begin{align*} 
   & \forall x \in \Sigma^{\ast}  : \pi_j(a_j^m x) \in \pi_j(L) \leftrightarrow \pi_j(a_j^n x) \in \pi_j(L) \\
   & \Leftrightarrow \forall x \in \{a_j\}^{\ast} : a_j^m x \in \pi_j(L) \leftrightarrow a_j^n x \in \pi_j(L) \\ 
   & \Leftrightarrow a_j^m \equiv_{\pi_j(L)} a_j^n,
 \end{align*}
 and we have shown the statement.~\qed
 \end{proof}
\end{toappendix}

\begin{lemmarep}
\label{lem:sc_shuffle_lang}
 Let $\Sigma = \{a_1, \ldots, a_k\}$ and, for $j \in \{1,\ldots,k\}$, $L_j \subseteq \{a_j\}^*$
 be regular and infinite with index $i_j$ and period $p_j$.
 Then,
 $
  \stc\left(\bigshuffle_{j=1}^k L_j\right) = \prod_{j=1}^k\stc(L_j) = \prod_{j=1}^k (i_j + p_j)
 $
 and $\bigshuffle_{j=1}^k L_j$ has index vector $(i_1, \ldots, i_k)$ and period vector $(p_1, \ldots, p_k)$.
 With Thm.~\ref{thm:product-form_characterizations}, $\bigshuffle_{j=1}^k L_j$ has a product-form minimal automaton.
\end{lemmarep}
\begin{proof}
 Set $L = \bigshuffle_{j=1}^k L_j$. Let $u,v \in \Sigma^*$ be two words such that
 \[
  u = a_1^{n_1} \cdot\ldots\cdot a_k^{n_k} \mbox{ and }
  v = a_1^{m_1} \cdot\ldots\cdot a_k^{n_k}.
 \]
 with $0\le n_j, m_j < \stc(L_j)$, $j \in \{1,\ldots,k\}$.
 Suppose there exists $r \in \{1,\ldots, k\}$
 such that $n_r \ne m_r$. As $L_r$ is unary and $\max\{n_r,m_r\}<\stc(L_r)$, 
 \[ \varepsilon, a_r, a_r^2, \ldots, a_r^{\max\{n_r,m_r\}-1} \]
 are represenatives of distinct Nerode right-equivalence classes of $L_r$.
 Hence, there exists $l_r > 0$ such that, without loss of generality,
 \[
  a_r^{n_r + l_r} \in L_r \mbox{ and } a_r^{m_r + l_r} \notin L_r.
 \]
 As all the $L_j$, $j \in \{1,\ldots, k\}$, are infinite, there exists
 $l_j \ge 0$ for $j \in \{1,\ldots, k\}\setminus\{r\}$
 such that $a_j^{n_j + l_j} \in L_j$.
 Then,
 \[
  a_1^{n_1 + l_1} \cdot\ldots\cdot a_k^{n_k + l_k} \in L.
 \] 
 And, as  $a_r^{m_r + l_r} \notin L_r$, we find
 \[
   a_1^{m_1 + l_1} \cdot\ldots\cdot a_k^{m_k + l_k} \notin L.
 \] 
 Set $x = a_1^{l_1}\cdot\ldots\cdot a_k^{l_k}$. 
 Then, as $L$ is a commutative language,
 \[
  ux \in L \mbox{ and } vx \notin L,
 \]
 i.e., $u \not\equiv_L v$.
 Hence, all words of the form $a_1^{n_1} \cdot\ldots\cdot a_k^{n_k}$
 with $0 \le n_j < \stc(L_j)$ are pairwise non-equivalent.
 So, $\prod_{j=1}^k\stc(L_j) \le \stc(L)$.
 Let $j \in \{1,\ldots,k\}$ and $\mathcal A_{L_j} = (\{a_j\}, Q_{L_j}, \delta_{L_j}, [\varepsilon]_{\equiv_{L_j}}, F_{L_j})$
 be the minimal automaton
 for $L_j$. Also, let $\mathcal C_L = (\Sigma, S_1 \times \ldots\times S_k, \delta, s_0, F)$
 be the minimal commutative automaton from Definition~\ref{def::min_com_aut}
 and consider the automaton $\mathcal C_{L, \{a_j\}} = (\{a_j\}, S_j, \delta_j, [\varepsilon]_{\equiv_L}, F_j)$ 
 from Lemma~\ref{lem:union_shuffle_projection_lang}.
 By Theorem~\ref{thm::min_com_aut}, we have $L = L(\mathcal C_L)$. 
 Hence, $\stc(L) \le \prod_{j=1}^k |S_j|$.
 By Lemma~\ref{lem::nerode_on_projections}, as $L_j = \pi_j(L)$,
 the map
 \[
  [a_j^n]_{\equiv_L} \mapsto [a_j^n]_{\equiv_{L_j}}
 \]
 is a well-defined isomorphism between $\mathcal A_{L_j}$
 and $\mathcal C_{L, \{a_j\}}$. So, $|S_j| = \stc(L_j)$.
 Hence,
 \[
  \stc(L) \le \prod_{j=1}^k \stc(L_j),
 \]
 and, combining everything, $\stc(L) = \prod_{j=1}^k \stc(L_j)$.
 The entry of the index and period vector of $L$ for $j \in \{1,\ldots,k\}$
 is the index and period of the unary automaton $\mathcal C_{L,\{a_j\}}$.
 As $\mathcal C_{L,\{a_j\}}$ and $\mathcal A_{L_j}$ are isomorphic,
 they have the same index and period. So, all claims of the 
 statement are shown.~\qed
\end{proof} 

 In the next theorem and the following remark, we investigate closure properties of the class in question.

\begin{theoremrep}
\label{thm:closure_properties}
 The class of commutative regular languages with minimal automata
 of product-form is closed under left and right quotients
 and complementation. It is not closed under union, intersection and projection.
\end{theoremrep}
\begin{proof} 
 As for every language $L \subseteq \Sigma^*$,
 the state complexity of $L$ and the complement of $L$
 are equal (recall we are only concerned with complete automata here)
 and $\mathcal A_L$ is minimal for both languages, closure
 of the class under complementation follows.
 
 For commutative languages, left and right quotients
 give the same sets, i.e., $u^{-1} L = Lu^{-1}$,
 hence it is sufficient to show closure under left quotients.
 
 Now, if $\mathcal A_L = (\Sigma, Q_L, \delta_L, [\varepsilon]_{\equiv_L}, F_L)$
 is the minimal automaton, then, for $u \in \Sigma^*$,
 $\mathcal B = (\Sigma, Q_L, \delta_L, [u]_{\equiv_L}, F_L)$
 recognizes $u^{-1}L$. In fact, as for every $[u]_{\equiv_L} \ne [v]_{\equiv_L}$,
 there exists $x \in \Sigma^*$ such that $ux \in L$ and $vx \notin L$, or vice versa.
 So, we have $[ux]_{\equiv_L} \in F_L$ and $[vx]_{\equiv_L}\notin F_L$, or vice versa.
 This implies, after discarding states not reachable from the start state,
 that $\mathcal B$ is isomorphic to the minimal automaton of $u^{-1}L$.
 Now, observe that $[w]_{\equiv_L}$ in $\mathcal A_L$
 corresponds to the state $([a_1^{|w|_{a_1}}]_{\equiv_L},\ldots,[a_k^{|w|_{a_k}}]_{\equiv_L})$
 in $\mathcal C_L$.\todo{das genauer?}
 So, $\mathcal B$ is of product-form, which gives the claim.
 Note that, for every $j \in \{1,\ldots,k\}$,
 the index for the letter $a_j$ of $u^{-1}L$
 is $\max\{ i_j - |u|_{a_j}, 0 \}$
 and the period for the letter $a_j$ is the same.
 
 See Remark~\ref{rem:union_intersection_not_closed}
 for examples of languages which show that the class considered is not closed
 under union, intersection and projection.\qed
\end{proof} 

\begin{remark}
\label{rem:union_intersection_not_closed} 
 We have $a\shuffle b^* \cap a^* \shuffle b = a\shuffle b$,
 showing, using Proposition~\ref{prop:unary_min_prod_form} and~\ref{prop:at_most_one_projection_finite},
 that this class is not closed under intersection
 and by DeMorgan's laws, as we have closure under complementation, 
 we also cannot have closure under union. 
 Also, $L = aa^* \shuffle bb^* \shuffle cc^* \cup bb^* \shuffle a^* \cup b^*$ \todo{beweisen}
 has a minimal automaton of product-form, but $\pi_{\{a,b\}}(L) = bb^* \shuffle a^* \cup b^*$
 is the language from Example~\ref{ex:min_aut_vs_min_com_aut}. So, this class
 is also not closed under projection.
\end{remark}

\begin{theoremrep}
\label{thm:sc_results}
%
 Let $U, V \subseteq \Sigma^*$ be commutative regular languages
 with product-form minimal automata with $\stc(U) = n$
 and $\stc(V) = m$.

 \begin{enumerate}
 \item \label{thm:sc:shuffle} 
  We have $\stc(U \shuffle V) \le 2nm$ if $|\Sigma| > 1$
 and $\stc(U \shuffle V) \le nm$ if $|\Sigma| = 1$.
 Furthermore, for any $\Sigma$, there exist
 $U, V$ as above such that $nm \le \stc(U \shuffle V)$.
 
 \item In the worst case, $n$ states are sufficient
  and necessary for a DFA to recognize~$\uparrow\! U$.
  Similarly for the
  downward closure and interior
  operations.
 \item In the worst case, $n$ states are sufficient
  and necessary for a DFA to recognize
  the projection of~$U$.
  
 \item In the worst case, $nm$ states are sufficient
  and necessary for a DFA to recognize
  $U \cap V$ or $U \cup V$.
 \end{enumerate}
\end{theoremrep}
\begin{proof}
\begin{enumerate}
\item If $|\Sigma| = 1$, then the shuffle operation
 is the same as concatenation and the claim follows
 by the state complexity of concatenation
 on unary languages, see~\cite{YuZhuangSalomaa1994}.
 Otherwise, with Theorem~\ref{thm:sc_shuffle},
 \begin{multline*} 
  \stc(U \shuffle V) \le \prod_{l=1}^k (i_l + j_l + 2\lcm(p_l, q_l) - 1) 
   \\ \le \prod_{l=1}^k 2(i_l + p_l)(j_l + q_l) 
    = 2 \cdot \prod_{l=1}^k (i_l + p_l) \prod_{l=1}^k (j_l + q_l) 
    = 2 n m.
 \end{multline*}

 Let $p, q > 0$ be two coprime numbers
 and set $U = \bigshuffle_{j=1}^k a_j^{p-1}(a_j^{p})^*$
 and $V = \bigshuffle_{j=1}^k a_j^{q-1}(a_j^{q})^*$.
 By Lemma~\ref{lem:sc_shuffle_lang}
 both have minimal automata of product-form
 and $\stc(U) = p^k$ and $\stc(V) = q^k$
 Also, $\stc( a_j^{p-1}(a_j^{p})^* \cdot a_j^{q-1}(a_j^{q})^* ) = pq$, see~\cite[Lemma 5.1 and Fact 5.2]{YuZhuangSalomaa1994}.
 Further,
 \[
  U \shuffle V = \bigshuffle_{j=1}^k ( a_j^{p-1}(a_j^{p})^* \cdot a_j^{q-1}(a_j^{q})^* ).
 \]
 So, with Lemma~\ref{lem:sc_shuffle_lang},
 the minimal automaton of $U \shuffle V$
 has product-form and $\stc(U\shuffle V) = p^k \cdot q^k$.
 
\item This is a direct application
 of Theorem~\ref{thm:sc_closure_interior},
 as by assumption
 and Theorem~\ref{thm:product-form_characterizations}
 we have $n = \prod_{l=1}^k (i_l + p_l)$.
 For the lower bound, use $L = a^n \shuffle (\Sigma\setminus\{a\})^*$.
 Then, the upward closure is $a^na^* \shuffle (\Sigma\setminus\{a\})^*$
 and the downward closure is $\{\varepsilon,a,\ldots,a^n\}\shuffle (\Sigma\setminus\{a\})^*$.
 
\item Set $L = \shuffle_{j=1}^k a_j^* = \Sigma^*$.
 Then, $\stc(L) = 1$. Or, for any $n > 1$, where we assume $\Sigma = \{a,b\}$
 for notational simplicity, we have $L = a^* \shuffle b^nb^*$, then $\stc(L) = n+1$
 and $\stc(\pi_{\{b\}}(L)) = n+1$.

\item The stated bound is valid for union and intersection
 in general, see~\cite[Theorem 4.3]{YuZhuangSalomaa1994}.
 
 Let $p, q > 0$ be two coprime numbers
 and set $U = \bigshuffle_{j=1}^k a_j^{p-1}(a_j^{p})^*$
 and $V = \bigshuffle_{j=1}^k a_j^{q-1}(a_j^{q})^*$.
 By Lemma~\ref{lem:sc_shuffle_lang}
 both have minimal automata of product-form.
 Also, $\stc( a_j^{p-1}(a_j^{p})^* \cap a_j^{q-1}(a_j^{q})^* ) = pq$
 and
 \[
  U \cap V = \bigshuffle_{j=1}^k ( a_j^{p-1}(a_j^{p})^* \cap a_j^{q-1}(a_j^{q})^* ).
 \]
 So, with Lemma~\ref{lem:sc_shuffle_lang},
 the minimal automaton of $U \cap V$
 has product-form. As this property is closed
 under complementation by Theorem~\ref{thm:closure_properties}
 the statement for union is implied. 
\end{enumerate}
 This finishes the proof.\qed
\end{proof}

\begin{remark} 
\label{rem:lower_bound_shuffle}
 I do not know if the bound $2nm$ stated in Theorem~\ref{thm:sc:shuffle}
 for the shuffle operation is tight, but the next example
 shows that if we have a binary alphabet, we can find commutative languages
 with state complexities $n$ and $m$
 and product-form minimal automata whose shuffle
 needs an automaton with strictly more than $nm$ states.
 A similar construction works for more than two letters.
Let $p, q > 11$ be two coprime numbers.
Set $U = a \shuffle b^{p-1}(b^p)^* \cup a^{p-1}(a^p)^* \shuffle bb^{p-1}(b^p)^*$
and $V = b^{q-1}(b^q)^* \cup a^{q-1}(a^q)^*  \shuffle bbb^{q-1}(b^q)^*$.
Then, using that shuffle distributes over union
and a number-theoretical result from~\cite[Lemma 5.1]{YuZhuangSalomaa1994},
we find
\begin{multline*}
    U \shuffle V = a \shuffle W \cup a^{p-1} (a^p)^* \shuffle bW \cup \\ 
            a^q(a^q)^* \shuffle bbW \cup a^{q-1 + p - 1}(a^p)^* (a^q)^* \shuffle bbbW,
\end{multline*} 
where $a^{q-1 + p - 1}(a^p)^* (a^q)^* = F \cup a^{pq - 1}a^*$
for some finite set $F \subseteq \{\varepsilon, a, \ldots, a^{pq - 3} \}$
and $W = E \cup b^{pq-1}b^*$ for some $E \subseteq \{\varepsilon, b, \ldots, b^{pq-3} \}$.
Note that by~\cite[Lemma 5.1]{YuZhuangSalomaa1994} we have $a^{pq-2} \shuffle bbbW \cap U \shuffle V = \emptyset$. 
All languages involved have a product-form minimal automaton. 
The minimal automaton for $U$ has $(2 + p) \cdot (1+p)$
states, the minimal automaton for $V$ has $(1 + q)\cdot (q+2)$ states
and that for $U \shuffle V$ has $2pq\cdot (pq+3)$ states.
As $(p-11)(q-11) > 0$ we
can deduce $(1+p)(2+p)(1+q)(2+q) < 2(pq)^2 < 2pq(pq+3)$.
\end{remark}


\section{Partial Commutativity and Other Subclasses}

\label{subsec:subclasses}

A \emph{partial commutation} on $\Sigma$ is a symmetric and irreflexive relation $I \subseteq \Sigma \times \Sigma$, often called
the \emph{independence relation}. Of interest is the \emph{congruence} $\sim_I$ generated on $\Sigma^*$
by the relation
$
 \{ (ab, ba) \mid (a,b) \in I \}. 
$
A language $L \subseteq \Sigma^*$ is \emph{closed under $I$-commutation}
if $u \in L$ and $u \sim_I v$ implies $v \in L$.
If $I = \{ (a,b) \in \Sigma \times \Sigma \mid a \ne b \}$,
then the languages closed under $I$-commutation are precisely the commutative languages.

Languages closed under some partial commutation relation have been extensively studied,
see~\cite{DBLP:journals/iandc/GomezGP13}, also for further references,
and in particular with relation to (Mazurkiewicz) trace theory~\cite{DBLP:books/ws/95/DR1995,DBLP:journals/iandc/GomezGP13,mazurkiewicz77}, a formalism to describe the execution histories
of concurrent programs.

Here, we will focus on the case that $(\Sigma\times \Sigma) \setminus I$ is transitive, i.e.,
if $u \not\sim_I v$ and $v \not\sim_I w$ implies $u \not\sim_I w$.
In this case, $(\Sigma\times \Sigma) \setminus I$ is an equivalence relation
and we will write $\Sigma_1, \ldots, \Sigma_k$ for the different equivalence classes.

The reason to focus on this particular generalization is, as we will see later, that the definition of the minimal commutative
automaton transfers to this more general setting without much difficulty.

To ease the notation, if we have a partial commutation relation as above with a corresponding
partition $\Sigma = \Sigma_1 \cup \ldots \Sigma_k$ of the alphabet,
we also write $\mathcal L_{\Sigma_1, \ldots, \Sigma_k}$ for the \emph{class of languages
closed under this partial commutation}. Then, as is easily seen,
we have $L \in \mathcal L_{\Sigma_1, \ldots, \Sigma_k}$
if and only if, for $x \in \Sigma_i$, $y \in \Sigma_j$ ($i \ne j$)
and each $u, v \in \Sigma^*$ we have
$
 uxyv \in L \Leftrightarrow uyxv \in L.
$
For example, $L$ is commutative
if and only if $L \in \mathcal L_{\{a_1\}, \ldots, \{a_k\}}$
for $\Sigma = \{a_1, \ldots, a_k\}$.

\begin{toappendix}
\begin{example}
\label{ex:commutate}
 Let $\Sigma = \Sigma_1 \cup \Sigma_2$
 with $\Sigma_1 = \{a,b\}$ and $\Sigma_2 = \{c,d\}$.
 Then, in the language $U = \{abcd, acbd, cabd, cadb, cdab \}$
 the letters commute according to the partition $\Sigma = \Sigma_1 \cup \Sigma_2$.
 On the contrary, in the language $V = \{abc, bac, cba\}$
 the letters do not commute according to the partition,
 as, for example, $abc \in L$, but $cab \notin L$.
\end{example}

\begin{remark}
\label{rem:letters_commutate_with_decomposition}
 Let $\Sigma = \Sigma_1 \cup \ldots \cup \Sigma_k$
 and $L \subseteq \Sigma^*$ be
 such that the letters in $L$
 commute according to the partition.
 Then, for each $w \in \Sigma^*$,
 \[
   w \in L \Leftrightarrow 
  \pi_{\Sigma_1}(w) \pi_{\Sigma_2}(w) \cdot \ldots \cdot \pi_{\Sigma_k}(w) \in L
 \]
 and
 $
  w \equiv_L \pi_{\Sigma_1}(w) \pi_{\Sigma_2}(w) \cdot \ldots \cdot \pi_{\Sigma_k}(w).
 $
\end{remark}
\end{toappendix}

\begin{toappendix}

The next lemma will be needed when we relate the different subclasses
in Subsection~\ref{subsec:subclasses}.

\begin{lemmarep}
\label{lem:equivalence_classes_projection}
 If $L \in \mathcal L_{\Sigma_1, \ldots, \Sigma_k}$
 and $u \in \Sigma_i^*$,
 then $[u]_{\equiv_L} \cap \Sigma_i^* \subseteq [u]_{\equiv_{\pi_{\Sigma_i}(L)}} \cap \Sigma_i^*$.
\end{lemmarep}
\begin{proof}
 Suppose $u,v \in \Sigma_i^*$, $i \in \{1,\ldots,k\}$, and $u \equiv_L v$.
     If $x \in \Sigma_i^*$ and $ux \in \pi_{\Sigma_i}(L)$,
     then $ux = \pi_{\Sigma_i}(w)$ for some $w \in L$.
     Set 
     \[ 
      w' = \pi_{\Sigma_1}(w) \cdots \pi_{\Sigma_{i-1}}(w)vx\pi_{\Sigma_{i+1}}(w)\cdots \pi_{\Sigma_k}(w).
     \]
     Then, using that $\equiv_L$ is a right-congruence, $L \in \mathcal L_{\Sigma_1, \ldots, \Sigma_k}$
     and Remark~\ref{rem:letters_commutate_with_decomposition},
     \begin{align*} 
      [w]_{\equiv_L}
       & = [\pi_{\Sigma_1}(w) \cdots \pi_{\Sigma_{i-1}}(w)ux\pi_{\Sigma_{i+1}}(w)\cdots \pi_{\Sigma_k}(w)]_{\equiv_L} \\
       & = [u \pi_{\Sigma_1}(w) \cdots \pi_{\Sigma_{i-1}}(w)x\pi_{\Sigma_{i+1}}(w)\cdots \pi_{\Sigma_k}(w)]_{\equiv_L} \\
       & = [v \pi_{\Sigma_1}(w) \cdots \pi_{\Sigma_{i-1}}(w)x\pi_{\Sigma_{i+1}}(w)\cdots \pi_{\Sigma_k}(w)]_{\equiv_L} \\
       & = [\pi_{\Sigma_1}(w) \cdots \pi_{\Sigma_{i-1}}(w)vx\pi_{\Sigma_{i+1}}(w)\cdots \pi_{\Sigma_k}(w)]_{\equiv_L} 
     \end{align*}
     So, $w' \equiv_L w$, and hence, $w' \in L$. Also, $\pi_{\Sigma_i}(w') = vx$.
     So, $vx \in \pi_{\Sigma_i}(L)$.~\qed
\end{proof}

\end{toappendix}

\subsection{The Canonical Automaton}
\label{subsec:can_aut}

Here, we generalize our notion of commutative minimal automaton, Definition~\ref{def::min_com_aut},
to have uniform recognition devices for languages in $\mathcal L_{\Sigma_1,\ldots,\Sigma_k}$.

\begin{definition}
\label{def:generalization_canonical_aut}
 Let $\Sigma = \Sigma_1 \cup \ldots \cup \Sigma_k$ be a partition
 and $L \subseteq \Sigma^*$.
 Set $\mathcal C_{L, \Sigma_1, \ldots, \Sigma_k} = (\Sigma, S_1 \times \ldots \times S_k,
 \delta, s_0, F)$
 with, for $i \in \{1,\ldots, k\}$,
 $
  S_i  = \{ [u]_{\equiv_L} \mid u \in \Sigma_i^* \}$,
  $F    = \{ ([\pi_{\Sigma_1}(u)]_{\equiv_L}, \ldots, [\pi_{\Sigma_k}(u)]_{\equiv_L}) \mid u \in L \}$,
 $s_0 = ( [\varepsilon]_{\equiv_L}, \ldots, [\varepsilon]_{\equiv_L})$
 and, for $x \in \Sigma_i$,
 \[
 \delta(([u_1]_{\equiv_L}, \ldots, [u_i]_{\equiv_L}, \ldots, [u_k]_{\equiv_L}), x)
   = ([u_1]_{\equiv_L}, \ldots, [u_ix]_{\equiv_L}, \ldots, [u_k]_{\equiv_L})
 \]
 with words $u_j \in \Sigma_j^*$, $j \in \{1,\ldots, k\}$.
 This is called the \emph{canonical automaton} for the given $L$ with
 respect to $\Sigma = \Sigma_1 \cup \ldots \cup \Sigma_k$.
\end{definition}

\begin{toappendix}
\begin{remark}
\label{rem:can_aut_transition}
 Let $\Sigma = \Sigma_1 \cup \ldots \cup \Sigma_k$ be a partition, $u_i,v_i \in \Sigma^*$, $i \in \{1,\ldots,k\}$,
 and $w \in \Sigma^*$.
 Then, for the canonical automaton $\mathcal C_{L, \Sigma_1, \ldots, \Sigma_k}$, 
 we have
 \begin{multline*}
  \delta(([u_1]_{\equiv_L}, \ldots, [u_k]_{\equiv_L}), w)
   = ([v_1]_{\equiv_L}, \ldots, [v_k]_{\equiv_L}) \\
   \Leftrightarrow 
   \forall i \in \{1,\ldots k\} : u_i\pi_{\Sigma_i}(w) \equiv_L v_i.
 \end{multline*}
\end{remark}
\end{toappendix}

\begin{toappendix} 
\begin{example}
 \begin{figure}[htb]
     \centering
    \scalebox{.75}{    
 \begin{tikzpicture}[>=latex',shorten >=1pt,node distance=4cm,on grid,auto]
  \node[state, initial, accepting] (1) {};
  \node[state]                     (2) [right of=1] {};
  \node[state]                     (3) [below of=2] {};
  \node[state]                     (4) [left of=3]  {};
  
  \path[->] (2) edge [loop right] node {$c$} (2)
            (4) edge [loop left] node {$c$} (4);
  
  \path[->] (1) edge node {$a$} (2)
            (2) edge node {$a$} (3)
            (3) edge node {$a$} (4)
            (4) edge node {$a$} (1);
            
 \path[->] (1) edge [bend left=10,pos=0.2] node {$b,c$} (3)
           (3) edge [bend left=10,pos=0.2] node {$b,c$} (1);
           
 \path[->] (2) edge [bend left=10,pos=0.2] node {$b$} (4)
           (4) edge [bend left=10,pos=0.2] node {$b$} (2);
 
 \end{tikzpicture}}
 \scalebox{.75}{   
 \begin{tikzpicture}[>=latex',shorten >=1pt,node distance=2cm and 4cm,on grid,auto]
  \node[state, initial, accepting] (1) {};
  \node[state]                     (2) [below= of 1] {};
  \node[state]                     (3) [below= of 2] {};
  \node[state]                     (4) [below= of 3] {};
  
  \node[state]                     (b1) [right= of 1]  {};
  \node[state]                     (b2) [below= of b1] {};
  \node[state]                     (b3) [below= of b2] {};
  \node[state]                     (b4) [below= of b3] {};
  
  \path[->] (2) edge [loop left] node {$c$} (2)
            (4) edge [loop left] node {$c$} (4);
  \path[->] (b2) edge [loop right] node {$c$} (b2)
            (b4) edge [loop right] node {$c$} (b4);

  \path[->] (1) edge node {$a$} (2)
            (2) edge node {$a$} (3)
            (3) edge node {$a$} (4)
            (4) edge [bend left=55] node {$a$} (1);
            
 \path[->] (1) edge [bend left=40,pos=0.7] node {$c$} (3)
           (3) edge [bend left=55,pos=0.3] node {$c$} (1);
           

  \path[->] (b1) edge node {$a$} (b2)
            (b2) edge node {$a$} (b3)
            (b3) edge node {$a$} (b4)
            (b4) edge [bend right=55,right]  node {$a$} (b1);
  \path[->] (b1) edge [bend left=25,pos=0.7] node {$c$} (b3)
            (b3) edge [bend left=55,pos=0.3] node {$c$} (b1);
        
  \path[->] (1) edge [bend left=10] node {$b$} (b1)
            (b1) edge[bend left=10] node {$b$} (1);
  \path[->] (2) edge [bend left=10] node {$b$} (b2)
            (b2) edge[bend left=10] node {$b$} (2);
  \path[->] (3) edge [bend left=10] node {$b$} (b3)
            (b3) edge[bend left=10] node {$b$} (3);
  \path[->] (4) edge [bend left=10] node {$b$} (b4)
            (b4) edge[bend left=10] node {$b$} (4);
 \end{tikzpicture}}
 
  \caption{On the left an automaton recognizing a language $L \in \mathcal L_{\{a,c\}, \{b\}}$
  and on the right the automaton $\mathcal C_{L, \{a,c,\}, \{b\}}$.}
  \label{fig:ex_gen_can_aut}
\end{figure}

 Please see Figure~\ref{fig:ex_gen_can_aut}
 for an automaton that recognizes a language in $\mathcal L_{\{a,c\},\{b\}}$
 and the canonical automaton derived from it.
 That the recognized language is indeed in $\mathcal L_{\{a,c\},\{b\}}$
 can be seen by noting that for each state $q \in Q$ 
 we have $\delta(q, ab) = \delta(q, ba)$
 and $\delta(q, cb) = \delta(q, bc)$.
\end{example}

\begin{example} 
 Let $\Sigma = \{a,b,c\}$ and $L = \Sigma^* ac^*b \Sigma^*$.
 Then $L \in \mathcal L_{\{a,b\},\{c\}}$. The minimal automaton of $L$
 has a single final state and is isomorphic to $\mathcal C_{L, \{a,b\},\{c\}}$,
 as $[\varepsilon]_{\equiv_L} = [c^n]_{\equiv_L}$ for each $n \ge 0$.
 The language $L \cap \Sigma^* c \Sigma^*$
 also has the property that the minimal automaton is isomorphic
 to $\mathcal C_{L \cap \Sigma^* c \Sigma^*, \{a,b\},\{c\}}$.
\end{example}
\end{toappendix}

Next, we show that the canonical automata recognize precisely the
languages in $\mathcal L_{\Sigma_1, \ldots, \Sigma_k}$.
Note that we have dropped the assumption of regularity of $L$.

\begin{theoremrep} 
\label{thm:canonical_aut}
 Let $L \subseteq \Sigma^*$ and $\Sigma = \Sigma_1 \cup \ldots \cup \Sigma_k$ be a partition.
 Then, 
 \begin{enumerate} 
 \item $L \subseteq L(\mathcal C_{L,\Sigma_1,\ldots,\Sigma_k})$ and $L(\mathcal C_{L,\Sigma_1,\ldots,\Sigma_k}) \in \mathcal L_{\Sigma_1, \ldots, \Sigma_k}$.
 \item $L = L(\mathcal C_{L,\Sigma_1,\ldots,\Sigma_k}) \Leftrightarrow L \in \mathcal L_{\Sigma_1, \ldots, \Sigma_k}$.
 \item Let $L \in \mathcal L_{\Sigma_1,\ldots,\Sigma_k}$. Then
  $L$ is regular if and only if $\mathcal C_{L,\Sigma_1,\ldots,\Sigma_k}$ 
  is finite. 
 \end{enumerate}
\end{theoremrep}
\begin{proof} 
\begin{enumerate}
 \item Let $\mathcal C_{L, \Sigma_1, \ldots, \Sigma_k} = (\Sigma, S_1 \times \ldots \times S_k,
 \delta, s_0, F)$ be the canonical automaton for $L$.
 For each $w \in \Sigma^*$, we have
 \[
  \delta(s_0, w)
   = ([\pi_{\Sigma_1}(w)]_{\equiv_L}, \ldots, [\pi_{\Sigma_k}(w)]_{\equiv_L}).
 \]
 If $w \in L$, then, by Definition~\ref{def:generalization_canonical_aut},
 we find $\delta(s_0, w) \in F$.
 So, $L \subseteq L(\mathcal C_{L,\Sigma_1, \ldots, \Sigma_K})$.
  By Definition~\ref{def:generalization_canonical_aut},
  for each $a \in \Sigma_i$, $b \in \Sigma_j$, $i \ne j$, $i,j \in \{1,\ldots,k\}$
  and state $s \in S_1 \times \ldots \times S_k$,
  we have $\delta(s, ab) = \delta(s, ba)$.
  So, we conclude $L(\mathcal C_{L,\Sigma_1,\ldots,\Sigma_k}) \in \mathcal L_{\Sigma_1, \ldots, \Sigma_k}$. 
  
 \item If $L = L(\mathcal C_{L, \Sigma_1, \ldots,\Sigma_k})$, then, by the previous
 item, $L \in \mathcal L_{\Sigma_1, \ldots, \Sigma_k}$.
 Now, suppose $L \in \mathcal L_{\Sigma_1, \ldots, \Sigma_k}$.
 By the previous item, we only have to establish $L(\mathcal C_{L, \Sigma_1, \ldots,\Sigma_k}) \subseteq L$.
 So, suppose $w \in L(\mathcal C_{L, \Sigma_1, \ldots, \Sigma_k})$.
 By Definition~\ref{def:generalization_canonical_aut},
 there exists $u \in L$ such that $\pi_{\Sigma_i}(w) \equiv_L \pi_{\Sigma_i}(u)$
 for $i \in \{1,\ldots, k\}$.
 As the letters in $L$ commute according to the partition,
 we have, by Remark~\ref{rem:letters_commutate_with_decomposition}, as $u \in L$,
 \[ 
 \pi_{\Sigma_1}(u) \pi_{\Sigma_2}(u) \pi_{\Sigma_3}(u) \cdots \pi_{\Sigma_k}(u) \in L.
 \]
 So, using $\pi_{\Sigma_1}(w) \equiv_L \pi_{\Sigma_1}(u)$, we find
 \[ 
  \pi_{\Sigma_1}(w) \pi_{\Sigma_2}(u)  \pi_{\Sigma_3}(u)\cdots \pi_{\Sigma_k}(u) \in L.
 \] 
 Using that the letters in $L$ commute according to the partition
 again, we find, with the previous equation, 
 \[ 
  \pi_{\Sigma_2}(u) \pi_{\Sigma_1}(w)  \pi_{\Sigma_3}(u) \cdots \pi_{\Sigma_k}(u) \in L.
 \] 
 And then, using $\pi_{\Sigma_2}(w) \equiv_L \pi_{\Sigma_2}(u)$,
 \[
  \pi_{\Sigma_2}(w) \pi_{\Sigma_1}(w)  \pi_{\Sigma_3}(u) \cdots \pi_{\Sigma_k}(u) \in L.
 \]
 Continuing in this manner, and reordering the result, we get
 \[
  \pi_{\Sigma_1}(w) \pi_{\Sigma_2}(w) \pi_{\Sigma_3}(w) \cdots \pi_{\Sigma_k}(w) \in L.
 \] 
 So, as the letters in $L$ commute according to the partition, $w \in L$.
 Hence $L(\mathcal C_{L, \Sigma_1, \ldots, \Sigma_k}) \subseteq L$. 

 \item As $\mathcal C_{L, \Sigma_1, \ldots \Sigma_k}$ is defined
 with the Nerode right-congruence classes, if $L$
 is regular, it must be a finite automaton. If the automaton is finite
 and $L \in \mathcal L_{\Sigma_1, \ldots, \Sigma_k}$,
 then, by the previous item, $L = L(\mathcal C_{L, \Sigma_1, \ldots, \Sigma_k})$
 and $L$ is regular.
\end{enumerate}
 \noindent So, we have established all statements in the theorem.~\qed
\end{proof}

Also, used in defining a subclass in the next subsection, we will derive a canonical automaton for certain projected languages from $\mathcal C_{L,\Sigma_1,\ldots,\Sigma_k}$. Essentially, the next definition and proposition mean that if we only use  
one ``coordinate'' of $\mathcal C_{L,\Sigma_1, \ldots, \Sigma_k}$, then this recognizes a projection of $L$. 

\begin{definition} \label{def::proj_aut}
Let $i \in \{1,\ldots, k\}$ and $L \in \mathcal L_{\Sigma_1,\ldots,\Sigma_k}$.
The \emph{canonical projection automaton (for $\Sigma_i)$}
is $\mathcal C_{L,\Sigma_i} = (\Sigma_i, S_i, \delta_i, [\varepsilon]_{\equiv_L}, F_i)$
with 
$S_i = \{ [u]_{\equiv_L} \mid u \in \Sigma_i^* \}$,
$\delta_i([u]_{\equiv_L}, x)  = [ux]_{\equiv_L} \mbox{ for } x \in \Sigma_i$
and $F_i  = \{ [\pi_{\Sigma_i}(u)]_{\equiv_L} \mid u \in L \}$.
\end{definition}

\begin{propositionrep}
\label{prop:projected_language}
 Let $L \in \mathcal L_{\Sigma_1, \ldots, \Sigma_k}$.
 Then, for $i \in \{1,\ldots,k\}$, $\pi_{\Sigma_i}(L) = L(\mathcal C_{L, \Sigma_i})$.
\end{propositionrep}
\begin{proof} 
 Let $\mathcal C_{L,\Sigma_i} = (\Sigma, S_i, \delta_i, [\varepsilon]_{\equiv_L}, F_i)$.
 Suppose $u \in \pi_{\Sigma_i}(L)$.
 Then, there exists $w \in L$ such that $u = \pi_{\Sigma_i}(w)$.
 So,  $[u]_{\equiv_L} \in F_i$. Hence, $u \in L(\mathcal C_{L, \Sigma_i})$.

 Conversely, if $u \in L(\mathcal C_{L, \Sigma_i}) \subseteq \Sigma_i^*$,
 then $u \equiv_L \pi_{\Sigma_i}(w)$ for some $w \in L$.
 We have\footnote{Set, for $u_1, \ldots, u_n \in \Sigma^*$, $\prod_{i=1}^n u_i = u_1 \cdot \ldots \cdot u_n$.} $\pi_{\Sigma_i}(w)\prod_{j=1, j\ne i}^k \pi_{\Sigma_j}(w) \in L$.
 So, $u \prod_{j=1, j \ne i}^k \pi_{\Sigma_j}(w) \in L$,
 which gives $u \in \pi_{\Sigma_j}(L)$. \qed
\end{proof}

\subsection{Subclasses in  \texorpdfstring{$\mathcal L_{\Sigma_1, \ldots, \Sigma_k}$}{L\_Sigma\_1...Sigma\_k}}

Here, we investigate several subclasses of $\mathcal L_{\Sigma_1,\ldots, \Sigma_k}$.
Recall that, for $L \subseteq \Sigma^*$, the minimal automaton
of $L$ is denoted by $\mathcal A_L$.

\begin{definition}
\label{def:Li}
 Let $\Sigma = \Sigma_1 \cup \ldots \cup \Sigma_k$
 be a partition. Then,
 define the following classes of languages.
  \begin{align*}
      \mathcal L_1 & = \{ L\mid \mathcal C_{L,\Sigma_1,\ldots,\Sigma_k} \mbox{ has a single final state and $L = L(\mathcal C_{L,\Sigma_1,\ldots,\Sigma_k})$. }\}, \\
       \mathcal L_2 & = \left\{ L \mid L = \bigshuffle_{i=1}^k \pi_{\Sigma_i}(L) \right\}, \\
        \mathcal L_3 & = \{ L \mid L = L(\mathcal C_{L,\Sigma_1,\ldots,\Sigma_k}), \forall i \in \{1,\ldots,k\} : \mathcal A_{\pi_{\Sigma_i}(L)} \mbox{ is isomorphic to } \mathcal C_{L,\Sigma_i} \}, \\
      \mathcal L_4 & = \{ L \mid \mathcal A_L \mbox{ is isomorphic to } \mathcal C_{L,\Sigma_1,\ldots,\Sigma_k} \}.
  \end{align*}
\end{definition}

First, we show that these are in fact subclasses of $\mathcal L_{\Sigma_1, \ldots, \Sigma_k}$.

\begin{propositionrep}
\label{prop:L_i_in_L_Sigmai}
 Let $\Sigma = \Sigma_1 \cup \ldots \cup \Sigma_k$
 be a partition.
  For each $i \in \{1,2,3,4\}$
  we have $\mathcal L_i \subseteq \mathcal L_{\Sigma_1, \ldots, \Sigma_k}$.
\end{propositionrep}
\begin{proof}
 If $i \in \{1,3,4\}$, by Theorem~\ref{thm:canonical_aut},
 we have $\mathcal L_i \subseteq \mathcal L_{\Sigma_1,\ldots,\Sigma_k}$.
 By the definition of the shuffle product, $\mathcal L_2  \subseteq \mathcal L_{\Sigma_1, \ldots, \Sigma_k}$.~\qed
\end{proof}

\begin{remark}
\label{rem:A_L_singlefinal_but_not_C_L}
 Regarding $\mathcal L_1$, note that there exist languages $L = L(\mathcal C_{L,\Sigma_1, \ldots, \Sigma_k})$
 such that the minimal automaton has a single final state, but
 $\mathcal C_{L,\Sigma_1, \ldots, \Sigma_k}$
 has more than one final state.
 For example, $L = \{ w \in \{a,b\}^* \mid |w|_a > 0 \mbox{ or } |w|_b > 0 \}$.
 However, if $\mathcal C_{L,\Sigma_1, \ldots, \Sigma_k}$ has a single final state, then
 the minimal automata also has only a single final state.
\end{remark}

\begin{example}
\label{ex:Lis}
 Let $\Sigma = \Sigma_1 \cup \Sigma_2$
 with $\Sigma_1 = \{a\}$ and $\Sigma_2 = \{b\}$.
 Set $L = ( aa(aaa)^* \shuffle bb(bbb)^* ) \cup ( a(aaa)^* \shuffle b(bbb)^* )$.
 Then $L \in (\mathcal L_3 \cap \mathcal L_4) \setminus \mathcal L_2$.
\end{example}

\begin{example}
 Set $L = ( a(aaa)^* \shuffle b ) \cup aa(aaa)^*$.
 Then $L \in \mathcal L_3 \setminus \mathcal L_4$.
\end{example}

The languages in $\mathcal L_1$ arise in connection
with the canonical automaton.

\begin{propositionrep}
  Let $L \in \mathcal L_{\Sigma_1, \ldots, \Sigma_k}$
  and 
  $
  \mathcal C_{L, \Sigma_1, \ldots, \Sigma_k} = (\Sigma, S_1 \times \ldots \times S_k, \delta, s_0, F).
  $
  Then, for all $s \in S_1 \times \ldots \times S_k$,
  $
   \{ w \in \Sigma^* \mid \delta(s_0, w) = s \} \in \mathcal L_1.
  $
\end{propositionrep}
\begin{proof} 
 Let $s = (s_1, \ldots, s_k) \in S_1 \times \ldots \times S_k$. Set $U = \{ w \in \Sigma^* \mid \delta(s_0, w) = s \}$.
 By construction of $\mathcal C_{L,\Sigma_1, \ldots, \Sigma_k}$, we 
 have $U \in \mathcal L_{\Sigma_1, \ldots, \Sigma_k}$.
 So, by Theorem~\ref{thm:canonical_aut},
 $U = L(\mathcal C_{U, \Sigma_1, \ldots, \Sigma_k})$, where
 $\mathcal C_{U, \Sigma_1, \ldots, \Sigma_k}$ is the canonical automaton for $U$.
 Let 
 \[
  E = \{ ([\pi_{\Sigma_1}(u)]_{\equiv_U}, \ldots, [\pi_{\Sigma_k}(u)]_{\equiv_U}) \mid u \in U \}
 \]
 be the final states of $\mathcal C_{U, \Sigma_1, \ldots, \Sigma_k}$.
 We have to show $|E| = 1$. Let $u_i \in \Sigma_i^*$ be such that $s_i = [u_i]_{\equiv_L}$
 for $i \in \{1,\ldots, k\}$.
 
 \begin{claiminproof}
    For all $v \in \Sigma_i^*$ and $u \in U$ we have
    \[
     v \equiv_U \pi_{\Sigma_i}(u) \Leftrightarrow 
     v \equiv_L u_i.
    \]
 \end{claiminproof}
 \begin{claimproof} We show two separate statements that, taken together, imply
  our claim.
  
 \begin{enumerate} 
  
 \item For all $i \in \{1,\ldots, k\}$ and $u \in U$, $[\pi_{\Sigma_i}(u)]_{\equiv_U} \cap \Sigma_i^* \subseteq [u_i]_{\equiv_L}$.
     
     \medskip

     Fix $i \in \{1,\ldots,k\}$.
     Suppose $x \in \Sigma_i^*$
     such that $x \equiv_U \pi_{\Sigma_i}(u)$.
     Then, as 
     \[ 
     \pi_{\Sigma_i}(u) \pi_{\Sigma_1}(u) \cdots \pi_{\Sigma_{i-1}}(u) \pi_{\Sigma_{i+1}}(u)\cdots \pi_{\Sigma_k}(u) \in U,
     \]
     we find, by the definition of the Nerode right-congruence,
     \[
     x \pi_{\Sigma_1}(u) \cdots \pi_{\Sigma_{i-1}}(u) \pi_{\Sigma_{i+1}}(u)\cdots \pi_{\Sigma_k}(u) \in U.
     \]
     Hence, $\delta(s_0, x \pi_{\Sigma_1}(u) \cdots \pi_{\Sigma_{i-1}}(u) \pi_{\Sigma_{i+1}}(u)\cdots \pi_{\Sigma_k}(u)) = s$.
     By Remark~\ref{rem:can_aut_transition}, as $x \in \Sigma_i^*$,
     we find $x \equiv_L u_i$.
     
     \medskip 
     
 \item For all $i \in \{1,\ldots, k\}$, if $x,y \in [u_i]_{\equiv_L}\cap \Sigma_i^*$,
  then $x \equiv_U y$.
  
    \medskip 
    
    Fix $i \in \{1,\ldots,k\}$.
    Assume there exist $x,y \in \Sigma_i^*$ such that $x \not\equiv_U y$, but $x,y \in [u_i]_{\equiv_L}$.
    Then, without loss of generality, there exists $z \in \Sigma^*$
    such that 
    \[
     xz \in U, \quad yz \notin U.
    \]
    But then, for all $j \in \{1,\ldots, k\}$, $\pi_{\Sigma_j}(xz) \equiv_L u_j$.
    As $x,y \in \Sigma_i^*$, for $j \ne i$, we also have $\pi_{\Sigma_j}(yz) \equiv_L u_j$.
    So, for $yz \notin U$ to hold true, we must have $\pi_{\Sigma_i}(yz) \not\equiv_L u_i$.
    However, by assumption $x,y \in  [u_i]_{\equiv_L}$, so
    \[
     \pi_{\Sigma_i}(yz) = y \pi_{\Sigma_i}(z) \equiv_L x \pi_{\Sigma_i}(z) = \pi_{\Sigma_i}(xz) \equiv_L u_i,
    \]
    which implies $\pi_{\Sigma_i}(yz) \equiv_L u_i$. So, $yz \notin U$ is not possible.
    Hence, all words from $\Sigma_i^*$ in $[u_i]_{\equiv_L}$
    must be equivalent for $\equiv_U$. 
 \end{enumerate}
 Combining the first and second part gives, for each $i \in \{1,\ldots, k\}$ and $u \in U$, $[\pi_{\Sigma_i}(u)]_{\equiv_U} \cap \Sigma_i^* = [u_i]_{\equiv_L} \cap \Sigma_i^*$.
 \end{claimproof}
 Now, choose $u, u' \in U$ and $i \in \{1,\ldots, k\}$.
 Then, by the above claim, we find
 \[
  \pi_{\Sigma_i}(u) \equiv_L u_i \mbox{ and }
  \pi_{\Sigma_i}(u') \equiv_L u_i.
 \]
 Using the above claim, with $v = \pi_{\Sigma_i}(u')$,
 we can then deduce $\pi_{\Sigma_i}(u') \equiv_U \pi_{\Sigma_i}(u)$.
 So, for each $u, u' \in U$, we have
 \[
  ([\pi_{\Sigma_1}(u)]_{\equiv_U}, \ldots, [\pi_{\Sigma_k}(u)]_{\equiv_U})
  = ([\pi_{\Sigma_1}(u')]_{\equiv_U}, \ldots, [\pi_{\Sigma_k}(u')]_{\equiv_U}),
 \]
 which implies $|E| = 1$.~\qed
\end{proof}

Next, we give alternative characterization for $\mathcal L_2, \mathcal L_3$
and $\mathcal L_4$.

\begin{theoremrep}
\label{thm:alternative_characterizations}
 Let $L \in \mathcal L_{\Sigma_1, \ldots, \Sigma_k}$. Then,
 \begin{enumerate}
 \item $L \in \mathcal L_2$ if and only if, for each $w \in \Sigma^*$,
 the following is true:
 \[
  w \in L \Leftrightarrow \forall i \in \{1,\ldots,k\} : \pi_{\Sigma_i}(w) \in \pi_{\Sigma_i}(L);
 \]
 
 \item  $
   L \in \mathcal L_3$ if and only if, for all $i \in \{1,\ldots,k\}$ and $u \in \Sigma_i^*$,
   we have 
  \[ 
  [u]_{\equiv_L} \cap \Sigma_i^* = [u]_{\equiv_{\pi_{\Sigma_i}(L)}} \cap \Sigma_i^*;
  \]
 
 \item $L \in \mathcal L_4$ if and only if, for each $u,v \in \Sigma^*$,
 \[ 
  u \equiv_L v 
  \Leftrightarrow \forall i \in \{1,\ldots,k\} : \pi_{\Sigma_i}(u) \equiv_L \pi_{\Sigma_i}(v).
 \]
 \end{enumerate}
\end{theoremrep}
\begin{proof}
 The separate claims are stated in
 Proposition~\ref{prop:char_L2}, Proposition~\ref{prop:char_L3}
 and Proposition~\ref{prop:char_L4}.\qed
\end{proof}

\begin{toappendix}

Note that, for each $L \subseteq \Sigma^*$,
\begin{equation}
\label{eqn:L_in_bigshuffle_pi_Sigma_i}
    L \subseteq \bigshuffle_{i=1}^k \pi_{\Sigma_i}(L).
\end{equation}

\begin{proposition}
\label{prop:char_L2} 
 Let $L \in \mathcal L_{\Sigma_1, \ldots, \Sigma_k}$.
 Then, $L \in \mathcal L_2$ if and only if, for each $w \in \Sigma^*$,
 the following is true:
 \[
  w \in L \Leftrightarrow \forall i \in \{1,\ldots,k\} : \pi_{\Sigma_i}(w) \in \pi_{\Sigma_i}(L).
 \]
\end{proposition}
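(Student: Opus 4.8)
The plan is to reduce the statement to a general shuffle--projection identity over the disjoint alphabets $\Sigma_1, \ldots, \Sigma_k$, the partition analogue of Equation~\eqref{eqn:word_in_shuffle_lang}. The crucial structural fact is that, since $\Sigma = \Sigma_1 \cup \ldots \cup \Sigma_k$ is a \emph{partition}, the letters of any word $w$ split into disjoint groups, so $w$ is precisely an interleaving of its projections $\pi_{\Sigma_1}(w), \ldots, \pi_{\Sigma_k}(w)$, and conversely each $\pi_{\Sigma_i}$ isolates exactly one shuffle component.

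First I would establish the identity
\[
 w \in \bigshuffle_{i=1}^k \pi_{\Sigma_i}(L) \Leftrightarrow \forall i \in \{1,\ldots,k\} : \pi_{\Sigma_i}(w) \in \pi_{\Sigma_i}(L).
\]
For the forward direction, any $w$ in the shuffle is an interleaving $w \in x_1 \shuffle \ldots \shuffle x_k$ with $x_i \in \pi_{\Sigma_i}(L) \subseteq \Sigma_i^*$; since the $\Sigma_i$ are pairwise disjoint, applying $\pi_{\Sigma_i}$ recovers exactly the component $x_i$, whence $\pi_{\Sigma_i}(w) = x_i \in \pi_{\Sigma_i}(L)$. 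The backward inclusion is one half of Equation~\eqref{eqn:L_in_bigshuffle_pi_Sigma_i} made sharp: if $\pi_{\Sigma_i}(w) \in \pi_{\Sigma_i}(L)$ for every $i$, then because the $\Sigma_i$ cover $\Sigma$ the word $w$ is itself an interleaving of $\pi_{\Sigma_1}(w), \ldots, \pi_{\Sigma_k}(w)$, so $w \in \bigshuffle_{i=1}^k \pi_{\Sigma_i}(L)$.

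With this identity in hand the proposition is immediate. By definition $L \in \mathcal{L}_2$ means $L = \bigshuffle_{i=1}^k \pi_{\Sigma_i}(L)$, i.e.\ $w \in L \Leftrightarrow w \in \bigshuffle_{i=1}^k \pi_{\Sigma_i}(L)$ for all $w$; chaining this with the identity gives exactly $w \in L \Leftrightarrow \forall i : \pi_{\Sigma_i}(w) \in \pi_{\Sigma_i}(L)$. Conversely, if that characterization holds for every $w$, the identity shows its right-hand side equals membership in $\bigshuffle_{i=1}^k \pi_{\Sigma_i}(L)$, so the two languages coincide and $L \in \mathcal{L}_2$.

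The only mildly delicate point---and the single place the partition hypothesis enters---is the disjointness argument in the shuffle--projection identity: one must verify that $\pi_{\Sigma_i}$ returns precisely the $i$-th shuffle component and that an arbitrary $w$ decomposes as an interleaving of its own projections. This is routine once one invokes that the $\Sigma_i$ are pairwise disjoint and exhaust $\Sigma$, so I expect no genuine obstacle. I note that the hypothesis $L \in \mathcal{L}_{\Sigma_1,\ldots,\Sigma_k}$ is not actually needed for the equivalence itself; it is carried along only because $\mathcal{L}_2 \subseteq \mathcal{L}_{\Sigma_1,\ldots,\Sigma_k}$ by Proposition~\ref{prop:L_i_in_L_Sigmai}, keeping the statement within the ambient class.
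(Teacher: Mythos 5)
Your proposal is correct and follows essentially the same route as the paper: both directions rest on the two facts that, for a partition, $w$ is an interleaving of its own projections and that $\pi_{\Sigma_j}$ applied to a shuffle of languages over the disjoint $\Sigma_i$ recovers exactly the $j$-th component. The only cosmetic difference is that you package these as a standalone equivalence $w \in \bigshuffle_{i=1}^k \pi_{\Sigma_i}(L) \Leftrightarrow \forall i : \pi_{\Sigma_i}(w) \in \pi_{\Sigma_i}(L)$ before chaining, whereas the paper distributes the same two observations across the two implications; your remark that the hypothesis $L \in \mathcal L_{\Sigma_1,\ldots,\Sigma_k}$ is not actually used also matches the paper's proof.
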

\begin{proof} 
\begin{enumerate}
\item  Suppose $L = \bigshuffle_{i=1}^k \pi_{\Sigma_i}(L)$.
 We show that the equivalence holds true.
 If $w \in L$, then, for each $i \in \{1,\ldots,k\}$, $\pi_{\Sigma_i}(w) \in \pi_{\Sigma_i}(L)$.
 Conversely, let $w \in \Sigma^*$ and assume, for each $i \in \{1,\ldots,k\}$, we have
 $\pi_{\Sigma_i}(w) \in \pi_i(L)$.
 Then, $w \in \bigshuffle_{i=1}^k \pi_{\Sigma_i}(w) \subseteq \bigshuffle_{i=1}^k \pi_{\Sigma_i}(L) = L$.

\item Now, suppose, for each $w \in \Sigma^*$, we have
 \[
  w \in L \Leftrightarrow \forall i \in \{1,\ldots,k\} : \pi_{\Sigma_i}(w) \in \pi_i(L).
 \]
 By Equation~\eqref{eqn:L_in_bigshuffle_pi_Sigma_i}, $L \subseteq \bigshuffle_{i=1}^k \pi_{\Sigma_i}(L)$.
 So, assume $w \in \bigshuffle_{i=1}^k \pi_{\Sigma_i}(L)$.
 Fix $j \in \{1,\ldots,k\}$.
 Then, as $\Sigma_1,\ldots,\Sigma_k$ are pairwise disjoint
 and so $\pi_{\Sigma_i}(\pi_{\Sigma_j}(L)) = \{\varepsilon\}$ for $i \ne j$,
 \[
 \pi_{\Sigma_j}(w) \in \pi_{\Sigma_j}\left( \bigshuffle_{i=1}^k \pi_{\Sigma_i}(L) \right) = 
 \bigshuffle_{i=1}^k \pi_{\Sigma_j}(\pi_{\Sigma_i}(L)) = \pi_{\Sigma_j}(L).
 \]
 Hence, by applying the equation, we find $w \in L$.
 So, $\bigshuffle_{i=1}^k \pi_{\Sigma_i}(L) \subseteq L$.
\end{enumerate}
 So, we have shown the equivalence of both conditions.~\qed
\end{proof}

\begin{proposition}
\label{prop:char_L3}
  Let $L \in \mathcal L_{\Sigma_1,\ldots,\Sigma_k}$.
  Then,
  $
   L \in \mathcal L_3$ if and only if, for each $i \in \{1,\ldots,k\}$ and $u \in \Sigma_i^*$,
   we have 
    $[u]_{\equiv_L} \cap \Sigma_i^* = [u]_{\equiv_{\pi_{\Sigma_i}(L)}} \cap \Sigma_i^*.
  $
\end{proposition}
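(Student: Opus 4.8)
The plan is to first discharge the clause $L = L(\mathcal C_{L,\Sigma_1,\ldots,\Sigma_k})$ appearing in the definition of $\mathcal L_3$. Since we assume $L \in \mathcal L_{\Sigma_1,\ldots,\Sigma_k}$, Theorem~\ref{thm:canonical_aut}, item~2, yields this equality for free. Hence $L \in \mathcal L_3$ is equivalent to the statement that for every $i \in \{1,\ldots,k\}$ the minimal automaton $\mathcal A_{\pi_{\Sigma_i}(L)}$ is isomorphic to the canonical projection automaton $\mathcal C_{L,\Sigma_i}$, and it suffices to treat each index $i$ separately.

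Next, I would fix $i$ and analyse $\mathcal C_{L,\Sigma_i} = (\Sigma_i, S_i, \delta_i, [\varepsilon]_{\equiv_L}, F_i)$. By Proposition~\ref{prop:projected_language} it recognizes $\pi_{\Sigma_i}(L)$, and by Definition~\ref{def::proj_aut} each state $[u]_{\equiv_L}$ with $u \in \Sigma_i^*$ is reached from the start state by reading $u$; thus $\mathcal C_{L,\Sigma_i}$ is accessible and complete. The key standard fact I would invoke is that an accessible complete DFA is isomorphic to the minimal DFA of the language it recognizes precisely when all of its states are pairwise distinguishable, and, crucially, this holds whether or not the automaton is finite, so no regularity hypothesis is needed. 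I would then compute the right language of the state $[u]_{\equiv_L}$: using $\delta_i([u]_{\equiv_L}, x) = [ux]_{\equiv_L}$ together with the fact that $v$ labels an accepting path iff $v \in \pi_{\Sigma_i}(L)$ (Proposition~\ref{prop:projected_language}), it equals $\{ x \in \Sigma_i^* : ux \in \pi_{\Sigma_i}(L)\}$. Hence two states $[u]_{\equiv_L}, [v]_{\equiv_L}$ with $u,v\in\Sigma_i^*$ are indistinguishable exactly when $u \equiv_{\pi_{\Sigma_i}(L)} v$.

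Putting these together, $\mathcal A_{\pi_{\Sigma_i}(L)}$ is isomorphic to $\mathcal C_{L,\Sigma_i}$ iff, for all $u,v\in\Sigma_i^*$, the implication $u \equiv_{\pi_{\Sigma_i}(L)} v \Rightarrow [u]_{\equiv_L} = [v]_{\equiv_L}$ holds, i.e.\ $u \equiv_{\pi_{\Sigma_i}(L)} v \Rightarrow u \equiv_L v$. Now Lemma~\ref{lem:equivalence_classes_projection} supplies the converse implication unconditionally, giving $[u]_{\equiv_L}\cap\Sigma_i^* \subseteq [u]_{\equiv_{\pi_{\Sigma_i}(L)}}\cap\Sigma_i^*$ for every $u$. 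So the isomorphism at index $i$ is equivalent to the reverse inclusion for all $u$, which is exactly the claimed equality $[u]_{\equiv_L}\cap\Sigma_i^* = [u]_{\equiv_{\pi_{\Sigma_i}(L)}}\cap\Sigma_i^*$; quantifying over all $i$ then finishes the equivalence.

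The step I expect to require the most care is the reduction of the isomorphism condition to distinguishability of the states of $\mathcal C_{L,\Sigma_i}$. Concretely, I must make sure the map $[u]_{\equiv_L} \mapsto [u]_{\equiv_{\pi_{\Sigma_i}(L)}}$ is well defined (this is precisely the always-true inclusion from Lemma~\ref{lem:equivalence_classes_projection}), that it respects the initial state and, via Proposition~\ref{prop:projected_language}, the final states, and that isomorphism onto the minimal automaton is then equivalent to its injectivity. I would deliberately phrase this through right languages rather than counting states, so that the argument also covers the case where $\mathcal C_{L,\Sigma_i}$ is infinite.
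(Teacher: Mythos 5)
Your proposal is correct and follows essentially the same route as the paper: the paper also builds the map $[u]_{\equiv_L}\cap\Sigma_i^*\mapsto[u]_{\equiv_{\pi_{\Sigma_i}(L)}}\cap\Sigma_i^*$, well-defined by Lemma~\ref{lem:equivalence_classes_projection}, observes that it is a surjective homomorphism from $\mathcal C_{L,\Sigma_i}$ onto $\mathcal A_{\pi_{\Sigma_i}(L)}$, and concludes that isomorphism holds exactly when this map is injective, i.e.\ when the two families of classes coincide. Your rephrasing via right languages and pairwise distinguishability of states is just a more explicit account of why that surjection is an isomorphism precisely under the stated equality, and your preliminary discharge of the clause $L=L(\mathcal C_{L,\Sigma_1,\ldots,\Sigma_k})$ via Theorem~\ref{thm:canonical_aut} is a detail the paper leaves implicit.
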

\begin{proof}
 The map $\varphi : \{ [u]_{\equiv_L} \cap \Sigma_i^* \mid u \in \Sigma_i^* \} \to 
 \{ [u]_{\pi_{\Sigma_i}(L)} \cap \Sigma_i^* \mid u \in \Sigma_i^* \}$ given by 
 \[
  \varphi([u]_{\equiv_L} \cap \Sigma_i^*) = [u]_{\equiv_{\pi_{\Sigma_i}(L)}} \cap \Sigma_i^*
 \]
 is well-defined by Lemma~\ref{lem:equivalence_classes_projection}.
 Also, the non-empty sets $[u]_{\equiv_L} \cap \Sigma_i^*$
 partition $\Sigma_i^*$ as a right-congruence over $\Sigma_i^*$.
 Moreover, the non-empty right-congruence classes of the form $[u]_{\equiv_L} \cap \Sigma_i^*$
 are precisely the states of $\mathcal C_{L, \Sigma_i}$.
 So, $\varphi$ induces a surjective homomorphism
 from $\mathcal C_{L,\Sigma_i}$ onto $\mathcal A_{\pi_{\Sigma_i}(L)}$.
 This yields the stated equivalence.~\qed
\end{proof}

\begin{proposition} 
\label{prop:char_L4}
 Let 
 $L \in \mathcal L_{\Sigma_1,\ldots,\Sigma_k}$.
 Then, $L \in \mathcal L_4$ if and only if, for each $u,v \in \Sigma^*$,
 \[ 
  u \equiv_L v 
  \Leftrightarrow \forall i \in \{1,\ldots,k\} : \pi_{\Sigma_i}(u) \equiv_L \pi_{\Sigma_i}(v).
 \]
\end{proposition}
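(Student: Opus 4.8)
The plan is to realize the condition ``$L\in\mathcal L_4$'' as the injectivity of the canonical surjection from $\mathcal C_{L,\Sigma_1,\ldots,\Sigma_k}$ onto $\mathcal A_L$ and then read off the stated biconditional. First I would note that, since $L\in\mathcal L_{\Sigma_1,\ldots,\Sigma_k}$, Theorem~\ref{thm:canonical_aut} gives $L=L(\mathcal C_{L,\Sigma_1,\ldots,\Sigma_k})$, and every state $([u_1]_{\equiv_L},\ldots,[u_k]_{\equiv_L})$ with $u_i\in\Sigma_i^*$ is reached from $s_0$ by the word $u_1\cdots u_k$, so all states are reachable. Hence, by the standard fact that an automaton with all states reachable admits a surjective homomorphism onto the minimal automaton~\cite{Hopcroft:1971,DBLP:books/daglib/0088160}, there is a unique surjective automaton homomorphism $h$ from $\mathcal C_{L,\Sigma_1,\ldots,\Sigma_k}$ onto $\mathcal A_L$. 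Since $h$ must send the state reached by a word $w$ to $[w]_{\equiv_L}$, it is forced to act by $h([\pi_{\Sigma_1}(w)]_{\equiv_L},\ldots,[\pi_{\Sigma_k}(w)]_{\equiv_L})=[w]_{\equiv_L}$, equivalently, by Remark~\ref{rem:letters_commutate_with_decomposition}, $h([u_1]_{\equiv_L},\ldots,[u_k]_{\equiv_L})=[u_1\cdots u_k]_{\equiv_L}$.

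Next I would observe that $L\in\mathcal L_4$ says exactly that $h$ is an isomorphism, and since $h$ is already surjective, this is the same as $h$ being injective. I would then unwind injectivity using reachability. Two arbitrary states can be written as the states reached by words $u$ and $v$, namely $([\pi_{\Sigma_1}(u)]_{\equiv_L},\ldots,[\pi_{\Sigma_k}(u)]_{\equiv_L})$ and $([\pi_{\Sigma_1}(v)]_{\equiv_L},\ldots,[\pi_{\Sigma_k}(v)]_{\equiv_L})$, and their $h$-images are $[u]_{\equiv_L}$ and $[v]_{\equiv_L}$. Thus $h$ is injective if and only if, for all $u,v\in\Sigma^*$, the equality $u\equiv_L v$ forces $\pi_{\Sigma_i}(u)\equiv_L\pi_{\Sigma_i}(v)$ for every $i$, which is precisely the forward implication of the claimed biconditional.

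Finally I would point out that the reverse implication is automatic for every $L\in\mathcal L_{\Sigma_1,\ldots,\Sigma_k}$: if $\pi_{\Sigma_i}(u)\equiv_L\pi_{\Sigma_i}(v)$ for all $i$, then Remark~\ref{rem:letters_commutate_with_decomposition} gives $u\equiv_L\pi_{\Sigma_1}(u)\cdots\pi_{\Sigma_k}(u)$ and $v\equiv_L\pi_{\Sigma_1}(v)\cdots\pi_{\Sigma_k}(v)$, and replacing each block $\pi_{\Sigma_i}(u)$ by $\pi_{\Sigma_i}(v)$ one coordinate at a time (moving the active block to the front by $I$-commutation, applying the right-congruence $\pi_{\Sigma_i}(u)\equiv_L\pi_{\Sigma_i}(v)$, and moving it back) yields $\pi_{\Sigma_1}(u)\cdots\pi_{\Sigma_k}(u)\equiv_L\pi_{\Sigma_1}(v)\cdots\pi_{\Sigma_k}(v)$, hence $u\equiv_L v$. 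This is the same computation that makes $h$ well defined. Combining the two observations, $h$ is injective exactly when the full biconditional holds, which is the statement.

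The main obstacle I anticipate is the careful bookkeeping in this reverse-implication step: substituting $\pi_{\Sigma_i}(v)$ for $\pi_{\Sigma_i}(u)$ while keeping the other blocks fixed interleaves $I$-commutation (to bring the relevant block to a position where the right-congruence applies) with right-congruence steps, so I must verify that each rearrangement stays within $\equiv_L$. Everything else is a direct translation between states of $\mathcal C_{L,\Sigma_1,\ldots,\Sigma_k}$ and projections of words, together with the elementary fact that a surjective homomorphism is an isomorphism precisely when it is injective.
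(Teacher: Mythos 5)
Your proposal is correct and follows essentially the same route as the paper's proof: both establish the right-to-left implication unconditionally via the block-substitution argument using partial commutation and the right-congruence, and both identify $L\in\mathcal L_4$ with the injectivity of the canonical surjective homomorphism from $\mathcal C_{L,\Sigma_1,\ldots,\Sigma_k}$ onto $\mathcal A_L$ (the paper constructs this map explicitly and uses a state-counting/two-sided-inverse argument where you invoke the standard reachability fact, but this is only a presentational difference).
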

\begin{proof} 
 First, we show that the implication from right to left is always true. 
 Hence, we only have to argue that if the other implication
 is true, this is equivalent to $L \in \mathcal L_4$.
 
 \begin{claiminproof}
  Let $L \in \mathcal L_{\Sigma_1, \ldots, \Sigma_k}$ and $u,v \in \Sigma^*$. Then,
  \[
    \forall i \in \{1,\ldots,k\} : \pi_{\Sigma_i}(u) \equiv_L \pi_{\Sigma_i}(v) \Rightarrow u \equiv_L v.
  \]
 \end{claiminproof}
 \begin{claimproof}
 Suppose \[ \forall i \in \{1,\ldots,k\} : \pi_{\Sigma_i}(u) \equiv_L \pi_{\Sigma_i}(v). \]
 Then, using $\pi_{\Sigma_1}(u) \equiv_L \pi_{\Sigma_1}(v)$, at it is a right-congruence, we find
 \[
  \pi_{\Sigma_1}(u) \pi_{\Sigma_2}(v) \cdots \pi_{\Sigma_k}(v)
   \equiv_L
  \pi_{\Sigma_1}(v) \pi_{\Sigma_2}(v) \cdots \pi_{\Sigma_k}(v).
 \]
 As $L \in \mathcal L_{\Sigma_1, \ldots, \Sigma_k}$, we have
 \[ 
 \pi_{\Sigma_1}(u) \pi_{\Sigma_2}(v)\pi_{\Sigma_3}(v)  \cdots \pi_{\Sigma_k}(v) \equiv_L \pi_{\Sigma_2}(v)\pi_{\Sigma_1}(u)\pi_{\Sigma_3}(v) \cdots \pi_{\Sigma_k}(v).
 \]
 Then, using $\pi_{\Sigma_2}(u) \equiv_L \pi_{\Sigma_2}(v)$,
 \[
  \pi_{\Sigma_2}(u) \pi_{\Sigma_1}(u)\pi_{\Sigma_3}(v) \cdots \pi_{\Sigma_k}(v)
   \equiv_L
  \pi_{\Sigma_2}(v) \pi_{\Sigma_1}(u)\pi_{\Sigma_3}(v) \cdots \pi_{\Sigma_k}(v).
 \]
 Hence, up to now,
 \[
   \pi_{\Sigma_2}(u) \pi_{\Sigma_1}(u)\pi_{\Sigma_3}(v) \cdots \pi_{\Sigma_k}(v)
   \equiv_L
    \pi_{\Sigma_2}(v) \pi_{\Sigma_1}(v)\pi_{\Sigma_3}(v) \cdots \pi_{\Sigma_k}(v).
 \]
 Continuing similarly, we can show that
 \[
   \pi_{\Sigma_2}(u) \cdots \pi_{\Sigma_k}(u)
   \equiv_L
   \pi_{\Sigma_2}(u) \cdots \pi_{\Sigma_k}(u).
 \]
 So\footnote{Alternatively, we can note that by commutativity, the Nerode right-congruence is also
 a left-congruence. Hence, it is the syntactic congruence, giving a natural composition operation
 on the equivalence classes. Then, 
 by Remark~\ref{rem:letters_commutate_with_decomposition},
 \[ 
 [u]_{\equiv_L} = [\pi_{\Sigma_1}(u) \cdots \pi_{\Sigma_k}(u)]_{\equiv_L}
  = [\pi_{\Sigma_1}(u)]_{\equiv_L} \cdots [\pi_{\Sigma_k}(u)]_{\equiv_L},
 \]
 which also gives that, if for all $i \in \{1,\ldots,k\}$
 we have $\pi_{\Sigma_i}(u) \equiv_L \pi_{\Sigma_i}(v)$, then $u \equiv_L v$.}, 
 by Remark~\ref{rem:letters_commutate_with_decomposition},
 $u \equiv_L v$. 
 \end{claimproof}
 
 Therefore, as
 \[
  \forall i \in \{1,\ldots,k\} : \pi_{\Sigma_i}(u) \equiv_L \pi_{\Sigma_i}(v) \Rightarrow 
  u \equiv_L v,
 \]
 the map
 \[
  ([\pi_{\Sigma_1}(u_1)]_{\equiv_L}, \ldots, [\pi_{\Sigma_k}(u_k)]_{\equiv_L}) \mapsto [\pi_{\Sigma_1}(u_1) \cdots \pi_{\Sigma_k}(u_k)]_{\equiv_L}.
 \]
 for $u_1, \ldots, u_k \in \Sigma^*$
 is well-defined. For, suppose $v_1, \ldots, v_k \in \Sigma^*$
 such that
 \[
  \forall i \in \{1,\ldots,k\} : \pi_{\Sigma_i}(u_i) \equiv_L \pi_{\Sigma_i}(v_i).
 \]
 Set $u = \pi_{\Sigma_1}(u_1)\cdots\pi_{\Sigma_k}(u_k)$
 and $v = \pi_{\Sigma_1}(v_1)\cdots\pi_{\Sigma_k}(v_k)$.
 Then, for all $i \in \{1,\ldots,k\}$,
 we have $\pi_{\Sigma_i}(u) = \pi_{\Sigma_i}(u_i)$
 and $\pi_{\Sigma_i}(v) = \pi_{\Sigma_i}(v_i)$,
 which implies $\pi_{\Sigma_i}(u) \equiv_L \pi_{\Sigma_i}(v)$
 and we find that $u \equiv_L v$.
 Also, as $([\pi_{\Sigma_1}(u)]_{\equiv_L}, \ldots, [\pi_{\Sigma_k}(u)])$
 gets mapped to $[\pi_{\Sigma_1}(u)\cdots\pi_{\Sigma_k}(u)]_{\equiv_L} = [u]_{\equiv_L}$,
 the map is surjective.
 By commutativity, we can also easily show that it is an automaton
 homomorphism.

 So, if $L \in \mathcal L_4$, then $\mathcal C_{L,\Sigma_1,\ldots, \Sigma_k}$
 and $\mathcal A_L$ have the same number of states. Hence, the above given
 surjective map is actually bijective.
 So,
 \[
  [\pi_{\Sigma_1}(u_1) \cdots \pi_{\Sigma_k}(u_k)]_{\equiv_L}
  = [\pi_{\Sigma_1}(v_1) \cdots \pi_{\Sigma_k}(v_k)]_{\equiv_L}
 \]
 implies 
 \[
  ([\pi_{\Sigma_1}(u_1)]_{\equiv_L}, \ldots, [\pi_{\Sigma_k}(u_k)]_{\equiv_L}) 
  = ([\pi_{\Sigma_1}(v_1)]_{\equiv_L}, \ldots, [\pi_{\Sigma_k}(v_k)]_{\equiv_L}). 
 \] 
 In particular, if $u \equiv_L v$,
 then, by Remark~\ref{rem:letters_commutate_with_decomposition},
 we have $[\pi_{\Sigma_1}(u) \cdots \pi_{\Sigma_k}(u)]_{\equiv_L}
  = [\pi_{\Sigma_1}(v) \cdots \pi_{\Sigma_k}(v)]_{\equiv_L}$,
  which gives
 \[ 
 ([\pi_{\Sigma_1}(u)]_{\equiv_L}, \ldots, [\pi_{\Sigma_k}(u)]_{\equiv_L}) 
  = ([\pi_{\Sigma_1}(v)]_{\equiv_L}, \ldots, [\pi_{\Sigma_k}(v)]_{\equiv_L}) 
 \] 
 and we have the implication
 \begin{equation}\label{eqn:L_4_implication}
  u \equiv_L v \Rightarrow 
  \forall i \in \{1,\ldots,k\} : \pi_{\Sigma_i}(u) \equiv_L \pi_{\Sigma_i}(v).
 \end{equation}
 
 Conversely, suppose Equation~\eqref{eqn:L_4_implication}
 holds true.
 Then, the map
 \[ 
  [u]_{\equiv_L}  \mapsto ([\pi_{\Sigma_1}(u)]_{\equiv_L} , \ldots, [\pi_{\Sigma_k}(u)]_{\equiv_L} )
 \] 
 is well-defined.
 As, for $u_1, \ldots, u_k \in \Sigma^*$,
 $u = \pi_{\Sigma_1}(u_1) \cdots \pi_{\Sigma_k}(u_k)$
 gets mapped to $([\pi_{\Sigma_1}(u_1)]_{\equiv_L} , \ldots, [\pi_{\Sigma_k}(u_k)]_{\equiv_L} )$,
 it is also surjective. So, by finiteness\footnote{Actually, we do not need to use
 finitess, and can show it directly by noting that the given map is a two-sided inverse
 to the map
 \[
  ([\pi_{\Sigma_1}(u_1)]_{\equiv_L}, \ldots, [\pi_{\Sigma_k}(u_k)]_{\equiv_L}) \mapsto [\pi_{\Sigma_1}(u_1) \cdots \pi_{\Sigma_k}(u_k)]_{\equiv_L}.
 \]}, it is a bijection between $\mathcal A_L$
 and $\mathcal C_{L,\Sigma_1, \ldots, \Sigma_k}$.~\qed
\end{proof}
\end{toappendix}

\begin{toappendix}
\begin{remark}
 Note that the condition in Proposition~\ref{prop:char_L3}
 is stated only for words from $\Sigma_i^*$, $i \in \{1,\ldots,k\}$.
 For example, stipulating the equation
 \[ [u]_{\equiv_L} \cap \Sigma_i^* = [\pi_{\Sigma_i}(u)]_{\equiv_{\pi_{\Sigma_i}(L)}} \cap \Sigma_i^*
 \]
 for all $u \in \Sigma^*$ is a strong demand.
 It implies, for $a \in \Sigma_i$, $b \in \Sigma_j$, $i \ne j$,
 $[ab]_{\equiv_L} = [a]_{\equiv_L} = [b]_{\equiv_L} $,
 as
 \begin{align*} 
  [ab]_{\equiv_L} \cap \Sigma_i^* & = [a]_{\equiv_{\pi_{\pi_{\Sigma_i}(L)}}} \cap \Sigma_i^*; \\ 
  [ab]_{\equiv_L} \cap \Sigma_j^* & = [b]_{\equiv_{\pi_{\pi_{\Sigma_j}(L)}}} \cap \Sigma_j^*, 
 \end{align*}
 which gives $a,b \in [ab]_{\equiv_L}$.
\end{remark}

\begin{remark}
 Let $L \subseteq \Sigma^*$.
 In several statements, we have intersected the equivalence classes for $\pi_{\Sigma_i}(L)$, $i \in \{1,\ldots,k\}$,
 with $\Sigma_i^*$.
 Equivalently, we could also say that we only consider the equivalence over the smaller alphabet $\Sigma_i^*$.
 More generally, if $\Gamma \subseteq \Sigma$ and $U \subseteq \Gamma^*$,
 then, $u, v \in \Gamma^*$ are equivalent for $\equiv_U$ over the alphabet $\Gamma$
 if and only if they are equivalent for $\equiv_U$ over $\Sigma$.
 Also, over the larger alphabet, all words using symbols not in $\Gamma$
 are equivalent.
\end{remark}

\begin{remark}
 The condition in Proposition~\ref{prop:char_L3}
 does not imply that the language is in $\mathcal L_{\Sigma_1, \ldots, \Sigma_k}$.
 Let $L = \{ ab, b\}$. 
 Then, $\pi_{\{a\}}(L) = \{a\}$, $\pi_{\{b\}}(L) = \{b\}$ and, for each $n,m \ge 0$,
 $
     [a^n]_{\equiv_L} \cap a^*  = [a^n]_{\equiv_{\{a\}}}$ and 
     $[b^m]_{\equiv_L} \cap b^*  = [b^m]_{\equiv_{\{b\}}}.$
 However, $L \notin \mathcal L_{\{a\},\{b\}}$.
\end{remark}
\end{toappendix}

\begin{example}
\label{ex:L4-incomparable}
 Let $L_1$ be the language from Example~\ref{ex::comm_aut}.
 Set $L_2 = a_1 \shuffle a_2 = \{a_1 a_2, a_2 a_1\}$.
 Both of their letters commute for the partition $\{a_1,a_2\} = \{a_1\} \cup \{a_2\}$.
 Then, $L_1 \in \mathcal L_4 \setminus \mathcal  L_3$ and $L_2 \in \mathcal L_1 \setminus \mathcal L_4$.
\end{example}

Finally, in Theorem~\ref{thm:L1inL2inL3},
we establish inclusion relations, which are all proper, between
$\mathcal L_1, \mathcal L_2$ and $\mathcal L_3$, also see Figure~\ref{fig:inclusions_Li}.

\begin{figure}[htb]
     \centering
\begin{tikzpicture}[scale=.7]
 \node (top) at (0,2)  {$\mathcal L_{\Sigma_1, \ldots, \Sigma_k}$};
 \node (L3)  at (-2,1) {$\mathcal L_3$};
 \node (L2)  at (-2.5,0) {$\mathcal L_2$};
 \node (L1)  at (-2,-1) {$\mathcal L_1$};
 \node (bottom) at (0,-2) {$\{\emptyset, \Sigma^*\}$};
 \node (L4)  at (1,0) {$\mathcal L_4$};
 
 \draw (top) -- (L3) -- (L2) -- (L1) -- (bottom) -- (L4) -- (top);
\end{tikzpicture}
  \caption{Inclusion relations between the language classes. 
  }
  \label{fig:inclusions_Li}
\end{figure}

\begin{toappendix}

\begin{example}\label{ex::lang_classes} 
 For the language $L = (a(aaa)^* \cup aa(aaa)^*) \shuffle b = a(aaa)^* \shuffle b \cup aa(aaa)^* \shuffle b$
 the minimal commutative automaton has more than a single
 final state, but $L = \pi_1(L) \shuffle \pi_2(L)$.
\end{example}

\begin{proposition} \label{prop::general_spec_form_final_states}
 Let $L \in \mathcal L_{\Sigma_1,\ldots,\Sigma_k}$.
 Then, the following statements are equivalent:
 \begin{enumerate}
 \item $L = \bigshuffle_{i=1}^k \pi_{\Sigma_i}(L)$;
 
 \item in the canonical automaton $\mathcal C_{L,\Sigma_1, \ldots, \Sigma_k} = (\Sigma, S_1 \times \ldots \times S_k, \delta, s_0, F)$
  we can write $F = F_1 \times \ldots \times F_k$ with $F_i \subseteq S_i$.
 \end{enumerate}
\end{proposition}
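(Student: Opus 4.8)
The plan is to reduce both implications to two results already available: the characterization of $\mathcal L_2$ from Proposition~\ref{prop:char_L2} (that $L=\bigshuffle_{i=1}^k\pi_{\Sigma_i}(L)$ is equivalent to $w\in L \Leftrightarrow \forall i: \pi_{\Sigma_i}(w)\in\pi_{\Sigma_i}(L)$) and the fact that each coordinate of the canonical automaton recognizes a projection, $\pi_{\Sigma_i}(L)=L(\mathcal C_{L,\Sigma_i})$, from Proposition~\ref{prop:projected_language}. The natural candidate for the components is $F_i=\{[\pi_{\Sigma_i}(u)]_{\equiv_L}\mid u\in L\}$, i.e.\ the final states of $\mathcal C_{L,\Sigma_i}$, which are exactly the $i$-th coordinate projections of $F$. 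Since, by Definition~\ref{def:generalization_canonical_aut}, every element of $F$ has its $i$-th coordinate in $F_i$, the inclusion $F\subseteq F_1\times\cdots\times F_k$ holds unconditionally; moreover any product decomposition of $F$ is forced to have these $F_i$ as its coordinate projections. Hence condition~(2) is equivalent to the reverse inclusion $F_1\times\cdots\times F_k\subseteq F$. Throughout I would use the identity $\delta(s_0,w)=([\pi_{\Sigma_1}(w)]_{\equiv_L},\ldots,[\pi_{\Sigma_k}(w)]_{\equiv_L})$ established in the proof of Theorem~\ref{thm:canonical_aut}.

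For $(1)\Rightarrow(2)$ I would take an arbitrary tuple $(s_1,\ldots,s_k)\in F_1\times\cdots\times F_k$, write $s_i=[\pi_{\Sigma_i}(u^{(i)})]_{\equiv_L}$ with $u^{(i)}\in L$, and form the witness $w=\pi_{\Sigma_1}(u^{(1)})\cdots\pi_{\Sigma_k}(u^{(k)})$. Because the $\Sigma_j$ are pairwise disjoint, $\pi_{\Sigma_i}(w)=\pi_{\Sigma_i}(u^{(i)})$, so in particular $\pi_{\Sigma_i}(w)\in\pi_{\Sigma_i}(L)$ for every $i$. Invoking Proposition~\ref{prop:char_L2}, assumption~(1) then yields $w\in L$, whence $\delta(s_0,w)=(s_1,\ldots,s_k)\in F$. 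This gives $F_1\times\cdots\times F_k\subseteq F$ and hence~(2).

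For $(2)\Rightarrow(1)$ I would check the $\mathcal L_2$-characterization of Proposition~\ref{prop:char_L2}. The implication $w\in L\Rightarrow\forall i:\pi_{\Sigma_i}(w)\in\pi_{\Sigma_i}(L)$ is immediate, so only the converse needs work. Assuming $\pi_{\Sigma_i}(w)\in\pi_{\Sigma_i}(L)$ for all $i$, Proposition~\ref{prop:projected_language} gives $\pi_{\Sigma_i}(w)\in L(\mathcal C_{L,\Sigma_i})$, so $[\pi_{\Sigma_i}(w)]_{\equiv_L}$ lies in the final-state set of the projection automaton, which is the $i$-th coordinate projection of $F$ and therefore contained in $F_i$. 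Consequently $\delta(s_0,w)=([\pi_{\Sigma_1}(w)]_{\equiv_L},\ldots,[\pi_{\Sigma_k}(w)]_{\equiv_L})\in F_1\times\cdots\times F_k=F$ by~(2), so $w\in L(\mathcal C_{L,\Sigma_1,\ldots,\Sigma_k})$. Since $L\in\mathcal L_{\Sigma_1,\ldots,\Sigma_k}$, Theorem~\ref{thm:canonical_aut} gives $L=L(\mathcal C_{L,\Sigma_1,\ldots,\Sigma_k})$, so $w\in L$ and~(1) follows.

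The main subtlety is bookkeeping rather than a genuine obstacle: one must pin down that the component sets in~(2) are forced to be the coordinate projections of $F$ (equivalently, the final states of $\mathcal C_{L,\Sigma_i}$), and handle the degenerate case $F=\emptyset$, where $L=\emptyset$, every projection is empty, and both conditions hold trivially. Once this is settled, each direction is a short argument in which the disjointness of the $\Sigma_j$ does the work of separating the coordinates and the two cited results supply the remaining content.
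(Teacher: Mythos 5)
Your proposal is correct and follows essentially the same route as the paper: both directions hinge on the observation that $F \subseteq F_1 \times \cdots \times F_k$ holds by definition, with $(1)\Rightarrow(2)$ proved by concatenating the projections $\pi_{\Sigma_1}(u^{(1)})\cdots\pi_{\Sigma_k}(u^{(k)})$ as a witness word, and $(2)\Rightarrow(1)$ proved by showing every $w\in\bigshuffle_{i=1}^k\pi_{\Sigma_i}(L)$ reaches a state in $F$. The only cosmetic difference is that in $(2)\Rightarrow(1)$ you close the argument via Proposition~\ref{prop:projected_language} and Theorem~\ref{thm:canonical_aut}, whereas the paper instead picks $v\in L$ with the same final-state tuple and concludes $[u]_{\equiv_L}=[v]_{\equiv_L}$ directly from Remark~\ref{rem:letters_commutate_with_decomposition}; both are sound.
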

\begin{proof} 
 \begin{enumerate}
 \item Suppose $L = \bigshuffle_{i=1}^k \pi_{\Sigma_i}(L)$.
  For the final state set $F$ of the canonical automaton, we will
  show $F = F_1 \times \ldots \times F_k$.
     
  \medskip
  
  First, note that, by Definition~\ref{def:generalization_canonical_aut},
  we find $F \subseteq F_1 \times \ldots \times F_k$.
  For the converse inclusion, let $u_1, \ldots, u_k \in L$, and consider the tuple 
  \[ ( [\pi_{\Sigma_1}(u_1)]_{\equiv_L}, \ldots, [\pi_{\Sigma_1}(u_k)]_{\equiv_L} ) \in F_1 \times \ldots \times F_k. 
  \]
  Choose $u \in \prod_{i=1}^k \pi_{\Sigma_i}(u_i)$.
  By assumption, $\prod_{i=1}^k \pi_{\Sigma_i}(L) \subseteq L$.
  Hence $u \in L$.
  As $\pi_{\Sigma_i}(u) = \pi_{\Sigma_i}(u_i)$ for $i \in \{1,\ldots, k\}$,
  we find 
  \[
   ( [\pi_{\Sigma_1}(u_1)]_{\equiv_L}, \ldots, [\pi_{\Sigma_1}(u_k)]_{\equiv_L} ) 
  = ( [\pi_{\Sigma_1}(u)]_{\equiv_L}, \ldots, [\pi_{\Sigma_1}(u)]_{\equiv_L} ). 
  \]
  By Definition~\ref{def:generalization_canonical_aut},
  $( [\pi_{\Sigma_1}(u)]_{\equiv_L}, \ldots, [\pi_{\Sigma_1}(u)]_{\equiv_L} ) \in F$.
  So, $F_1 \times \ldots \times F_k \subseteq F$.

 \item Assume $F = F_1 \times \ldots \times F_k$.
 
  \medskip 
 
 Let $u \in \bigshuffle_{i=1}^k \pi_{\Sigma_i}(L)$.
 Then there exist words $u_i \in L$, $i \in \{1,\ldots, k\}$,
 such that  $\pi_{\Sigma_i}(u) = \pi_{\Sigma_i}(u_i)$.
 So, $[\pi_{\Sigma_i}(u_i)]_{\equiv_L} \in F_i$
 and we find 
 \[ 
  ([\pi_{\Sigma_1}(u)]_{\equiv_L}, \ldots, [\pi_{\Sigma_k}(u)]_{\equiv_L}) \in F.
 \]
 By Definition~\ref{def:generalization_canonical_aut}, there exists $v \in L$
 such that
 \[ 
 ([\pi_{\Sigma_1}(u)]_{\equiv_L}, \ldots, [\pi_{\Sigma_k}(u)]_{\equiv_L})
 = ([\pi_{\Sigma_1}(v)]_{\equiv_L}, \ldots, [\pi_{\Sigma_k}(v)]_{\equiv_L}).
 \]
 Using Remark~\ref{rem:letters_commutate_with_decomposition} and the previous equation,
 \begin{align*}
     [u]_{\equiv_L} & = [\pi_{\Sigma_1}(u) \cdots \pi_{\Sigma_k}(u)]_{\equiv_L} \\
                    & = [\pi_{\Sigma_1}(u)]_{\equiv_L} \cdot\ldots\cdot [\pi_{\Sigma_k}(u)]_{\equiv L} \\
                    & = [\pi_{\Sigma_1}(v)]_{\equiv_L} \cdot\ldots\cdot [\pi_{\Sigma_k}(v)]_{\equiv L} \\
                    & = [v]_{\equiv L}.
 \end{align*}
 So, as $v \in L$, we have $u \in L$.
 Hence, $\prod_{i=1}^k \pi_{\Sigma_i}(L) \subseteq L$.
 So, with Equation~\eqref{eqn:L_in_bigshuffle_pi_Sigma_i}, 
 $L = \prod_{i=1}^k \pi_{\Sigma_i}(L)$.
 \end{enumerate}
 Hence, the first and the last condition are equivalent
 and this concludes the proof.~\qed
\end{proof}
\end{toappendix}

\begin{theoremrep}
\label{thm:L1inL2inL3}
  We have $\mathcal L_1 \subsetneq \mathcal L_2 \subsetneq \mathcal L_3$.
\end{theoremrep}
\begin{proof}
  By Proposition~\ref{prop::general_spec_form_final_states}, 
  we find $\mathcal L_1 \subseteq \mathcal L_2$. That the inclusion is proper is shown
  by Example~\ref{ex::lang_classes}. 
  Suppose $L = \bigshuffle_{i=1}^k \pi_{\Sigma_i}(L)$.
  Let $i \in \{1,\ldots,k\}$ and $u,v \in \Sigma_i^*$.
  We have 
  \[
   u \equiv_L v  \Leftrightarrow \forall x \in \Sigma^* : ux \in L \leftrightarrow vx \in L.
  \] 
  Using Proposition~\ref{prop:char_L2}, the condition on the right hand side is equivalent to
  \begin{multline}\label{eqn:L_2}
         \forall x \in \Sigma^* [ ( \forall j \in \{1,\ldots, k\} : \pi_{\Sigma_j}(ux) \in \pi_{\Sigma_j}(L) ) \leftrightarrow \\ ( \forall j \in \{1,\ldots, k\} : \pi_{\Sigma_j}(vx) \in \pi_{\Sigma_j}(L) ) ].
  \end{multline}
  
  \begin{claiminproof}
    Equation~\eqref{eqn:L_2} is equivalent to
    \begin{equation} \label{eqn:L_2_2}
     \forall x \in \Sigma^* : \pi_{\Sigma_i}(ux) \in \pi_{\Sigma_i}(L) \leftrightarrow \pi_{\Sigma_i}(vx) \in \pi_{\Sigma_i}(L).
    \end{equation}
  \end{claiminproof}
  \begin{claimproof}
   Suppose Equation~\eqref{eqn:L_2_2} holds true.
   Let $x \in \Sigma^*$.
   Then, if, for each $j \in \{1,\ldots,k\}$,
   we have 
   \[
     \pi_{\Sigma_j}(ux) \in \pi_{\Sigma_j}(L),
   \]
   using that, for $u,v \in \Sigma_i^*$, we have,
  \[
   \pi_{\Sigma_j}(ux) = \left\{ 
   \begin{array}{ll}
    u \pi_{\Sigma_i}(x) & \mbox{if } i = j; \\
      \pi_{\Sigma_j}(x) & \mbox{if } i \ne j,
   \end{array}\right.
  \]
  and similarly for $\pi_{\Sigma_j}(vx)$,
  we find with Equation~\eqref{eqn:L_2_2}
  that, for each $j \in \{1,\ldots,k\}$,
   \[
     \pi_{\Sigma_j}(ux) \in \pi_{\Sigma_j}(L) ).
   \]
   The other implication of Equation~\eqref{eqn:L_2}
   can be shown similarly. Hence,  Equation~\eqref{eqn:L_2} holds true.
   
   Conversely, suppose now Equation~\eqref{eqn:L_2}
   holds true.
   Let $x \in \Sigma^*$
   and assume
   \[
    \pi_{\Sigma_i}(ux) \in \pi_{\Sigma_i}(L).
   \]
   As $L \ne \emptyset$,
   we find $u_j \in \Sigma_j^*$
   such that $u_j \in \pi_{\Sigma_j}(L)$
   for each $j \in \{1,\ldots,k\}$.
   Then, set $y = u_1 \cdots u_{i-1} \pi_{\Sigma_i}(x) u_{i+1}\cdots u_k$.
   By choice and as $\pi_{\Sigma_i}(y) =  \pi_{\Sigma_i}(x)$,
   we have for each $j \in \{1,\ldots,k\}$ that
   \[
    \pi_{\Sigma_j}(u y) \in \pi_j(L).
   \] 
   Hence, by Equation~\eqref{eqn:L_2}, for each $j \in \{1,\ldots,k\}$,
   \[
    \pi_{\Sigma_j}(v y) \in \pi_j(L).
   \]
   In particular, $\pi_{\Sigma_i}(vx) = \pi_{\Sigma_i}(vy) \in \pi_i(L)$.
   The other implication of Equation~\eqref{eqn:L_2_2}
   can be shown similarly. Hence,  Equation~\eqref{eqn:L_2_2} holds true.
  \end{claimproof}

  So, by the previous claim, Equation~\eqref{eqn:L_2}
  simplifies to
  \[
   \forall x \in \Sigma^* : \pi_{\Sigma_i}(ux) \in \pi_{\Sigma_i}(L) \leftrightarrow \pi_{\Sigma_i}(vx) \in \pi_{\Sigma_i}(L),
  \] 
  which is equivalent to
  \[ 
  \forall x \in \Sigma_i^* : ux \in \pi_{\Sigma_i}(L) \leftrightarrow vx \in \pi_{\Sigma_i}(L).
  \]
  But this is precisely the definition of $u \equiv_{\pi_{\Sigma_i}(L)} v$.
  Hence,
  \[
    [u]_{\equiv_L} \cap \Sigma_i^* = [u]_{\pi_{\Sigma_i}(L)} \cap \Sigma_i^*
  \]
  and by Proposition~\ref{prop:char_L3}, we find $\mathcal L_2 \subseteq \mathcal L_3$.
  That the inclusion is proper, is shown by Example~\ref{ex:Lis}.~\qed
\end{proof}

\begin{remark}
\label{rem:inclusions}
 Theorem~\ref{thm:L1inL2inL3} and Example~\ref{ex:L4-incomparable}
 show that $\mathcal L_4$ is incomparable to each of the other language classes
 with respect to inclusion.
\end{remark}

\section{Conclusion} The language class of commutative regular languages
with minimal automata of product-form behaves well with respect
to the descriptional complexity measure of state complexity for certain operations, see Table~\ref{tab:sc_product-from},
and Lemma~\ref{lem:sc_shuffle_lang} allows us to construct
infinitely many commutative regular languages with product-form minimal automaton.
The investigation started could be carried out for other operations and measures of descriptional complexity
as well. Likewise, as done in~\cite{GomezA08,DBLP:conf/icgi/Gomez10} 
for commutative and more general partial commutativity conditions,
it might be interesting if the learning algorithms given there could be improved
for the language class introduced.

Lastly, if the bound $2nm$ for shuffle is tight is  an open problem.
Remark~\ref{rem:lower_bound_shuffle} shows that the bound $nm$ is not sufficient,
however, giving an infinite family of commutative regular languages with minimal automata
of product-form attaining the bound $2nm$ for shuffle is an open problem.

{\smallskip \noindent \footnotesize
\textbf{Acknowledgement.} I thank the anonymous referees of~\cite{Hoffmann2021NISextended} (the extended version of~\cite{DBLP:conf/cai/Hoffmann19}), whose feedback also helped in the present work. I also sincerely
thank the referees of the present submission, which helped me alot
in identifying unclear or ungrammatical formulations and
a missing definition.

}
\bibliographystyle{splncs04}
\bibliography{ms} 
\end{document}